\newtheorem{proposition}{Proposition}
\newtheorem{lemma}[proposition]{Lemma}
\newtheorem{corollary}[proposition]{Corollary}
\newcounter{example}
\mathchardef\ordinarycolon\mathcode`\:
\def\vcentcolon{\mathrel{\mathop\ordinarycolon}}
\newcommand{\nc}{\newcommand}
\nc{\rnc}{\renewcommand}
\nc{\lbar}[1]{\overline{#1}}
\nc{\bra}[1]{\langle#1|}
\nc{\ket}[1]{|#1\rangle}
\nc{\ketbra}[2]{|#1\rangle\!\langle#2|}
\nc{\braket}[2]{\langle#1|#2\rangle}
\DeclarePairedDelimiter\abs{\lvert}{\rvert}%
\DeclarePairedDelimiter\norm{\lVert}{\rVert}%
\let\oldabs\abs
\def\abs{\@ifstar{\oldabs}{\oldabs*}}
\let\oldnorm\norm
\def\norm{\@ifstar{\oldnorm}{\oldnorm*}}
\nc{\proj}[1]{| #1\rangle\!\langle #1 |}
\nc{\avg}[1]{\langle#1\rangle}
\nc{\rank}{\operatorname{Rank}}
\nc{\smfrac}[2]{\mbox{$\frac{#1}{#2}$}}
\nc{\tr}{\operatorname{Tr}}
\nc{\ox}{\otimes}
\nc{\dg}{\dagger}
\nc{\dn}{\downarrow}
\nc{\cA}{{\cal A}}
\nc{\cB}{{\cal B}}
\nc{\cC}{{\cal C}}
\nc{\cD}{{\cal D}}
\nc{\cE}{{\cal E}}
\nc{\cF}{{\cal F}}
\nc{\cG}{{\cal G}}
\nc{\cH}{{\cal H}}
\nc{\cI}{{\cal I}}
\nc{\cJ}{{\cal J}}
\nc{\cK}{{\cal K}}
\nc{\cL}{{\cal L}}
\nc{\cM}{{\cal M}}
\nc{\cN}{{\cal N}}
\nc{\cO}{{\cal O}}
\nc{\cP}{{\cal P}}
\nc{\cQ}{{\cal Q}}
\nc{\cR}{{\cal R}}
\nc{\cS}{{\cal S}}
\nc{\cT}{{\cal T}}
\nc{\cV}{{\cal V}}
\nc{\cX}{{\cal X}}
\nc{\cY}{{\cal Y}}
\nc{\cZ}{{\cal Z}}
\nc{\cW}{{\cal W}}
\nc{\csupp}{{\operatorname{csupp}}}
\nc{\qsupp}{{\operatorname{qsupp}}}
\nc{\var}{{\operatorname{var}}}
\nc{\rar}{\rightarrow}
\nc{\lrar}{\longrightarrow}
\nc{\polylog}{{\operatorname{polylog}}}
\nc{\wt}{{\operatorname{wt}}}
\nc{\supp}{{\operatorname{supp}}}
\nc{\argmin}{{\operatorname{argmin}}}
\def\a{\alpha}
\def\t{\theta}
\nc{\RR}{{{\mathbb R}}}
\nc{\CC}{{{\mathbb C}}}
\nc{\FF}{{{\mathbb F}}}
\nc{\NN}{{{\mathbb N}}}
\nc{\ZZ}{{{\mathbb Z}}}
\nc{\PP}{{{\mathbb P}}}
\nc{\QQ}{{{\mathbb Q}}}
\nc{\UU}{{{\mathbb U}}}
\nc{\EE}{{{\mathbb E}}}
\nc{\id}{{\operatorname{id}}}
\nc{\CHSH}{{\operatorname{CHSH}}}
\nc{\rU}{\mbox{U}}
\nc{\ob}[1]{#1}
\nc{\SEP}{{\text{\rm SEP}}}
\nc{\NS}{{\text{\rm NS}}}
\nc{\LOCC}{{\text{\rm LOCC}}}
\nc{\PPT}{{\text{\rm PPT}}}
\nc{\EXT}{{\text{\rm EXT}}}
\nc{\Sym}{{\operatorname{Sym}}}
\nc{\ERLO}{{E_{\text{r,LO}}}}
\nc{\ERLOCC}{{E_{\text{r,LOCC}}}}
\nc{\ERPPT}{{E_{\text{r,PPT}}}}
\nc{\ERLOCCinfty}{{E^{\infty}_{\text{r,LOCC}}}}
\nc{\Aram}{{\operatorname{\sf A}}}
\definecolor{beamer}{rgb}{0.2,0.2,0.7}
\definecolor{colorone}{rgb}{1,0.36,0.03}
\definecolor{colortwo}{rgb}{0.4,0.77,0.17}
\definecolor{colorthree}{rgb}{0.01,0.51,0.93}
\definecolor{colorfour}{rgb}{0.47,0.26,0.58}
\definecolor{colorfive}{rgb}{0.12,0.55,0.16}
\nc{\st}{\text{subject to} \ }
\nc{\supre}{\text{supremum} \ }
\nc{\sdp}{\text{sdp}}
\newcommand{\update}[1]{\textcolor{black}{#1}}
\newcommand{\reupdate}[1]{\textcolor{black}{#1}}
\begin{document}
\title{Optimal quantum dataset for learning a unitary transformation}

\author{Zhan Yu}
\affiliation{Institute for Quantum Computing, Baidu Research, Beijing 100193, China}

\author{Xuanqiang Zhao}
\affiliation{Institute for Quantum Computing, Baidu Research, Beijing 100193, China}

\author{Benchi Zhao}
\affiliation{Institute for Quantum Computing, Baidu Research, Beijing 100193, China}

\author{Xin Wang} \email{wangxin73@baidu.com}
\affiliation{Institute for Quantum Computing, Baidu Research, Beijing 100193, China}

\begin{abstract}
Unitary transformations formulate the time evolution of quantum states. How to learn a unitary transformation efficiently is a fundamental problem in quantum machine learning. The most natural and leading strategy is to train a quantum machine learning model based on a quantum dataset. Although the presence of more training data results in better models, using too much data reduces the efficiency of training. In this work, we solve the problem on the minimum size of sufficient quantum datasets for learning a unitary transformation exactly, which reveals the power and limitation of quantum data. First, we prove that the minimum size of a dataset with pure states is $2^n$ for learning an $n$-qubit unitary transformation. To fully explore the capability of quantum data, we introduce a practical quantum dataset consisting of $n+1$ elementary tensor product states that are sufficient for exact training. The main idea is to simplify the structure utilizing decoupling, which leads to an exponential improvement in the size of the datasets with pure states. Furthermore, we show that the size of the quantum dataset with mixed states can be reduced to a constant, which yields an optimal quantum dataset for learning a unitary. We showcase the applications of our results in oracle compiling and Hamiltonian simulation. Notably, to accurately simulate a 3-qubit one-dimensional nearest-neighbor Heisenberg model, our circuit only uses $96$ elementary quantum gates, which is significantly less than $4080$ gates in the circuit constructed by the Trotter-Suzuki product formula.
\end{abstract}

\date{\today}
\maketitle

\section{Introduction}
Machine learning is a task that builds a model to learn an unknown function based on a training dataset. The training dataset is a set of example input-output pairs repeatedly used during the learning process and is used to fit the parameters of the model. Machine learning has been used in a wide variety of applications~\cite{LeCun2015a}, such as computer vision, natural language processing, and speech recognition. 
At the same time, quantum computing, a technology that harnesses the laws of quantum mechanics to solve problems too complicated for classical computers, has been rapidly advancing. 
Inspired by the powerful capacity of machine learning~\cite{LeCun2015a}, it is natural to develop their quantum counterparts and try to gain more benefits,
which has given rise to an emerging research area, i.e., \emph{quantum machine learning} (QML)~\cite{Biamonte2017a,Schuld2018b,Arunachalam2017}.

QML is built on two components: models and data, both of which could be in the quantum version. QML models could be quantum analogs of classical machine learning models, such as quantum neural networks~\cite{Farhi2018,Cong2018,Benedetti2019a,Yu2022a}, quantum autoencoders~\cite{Romero2017,Wan2017,Cao2020}, and quantum kernel methods~\cite{Schuld2018c,Huang2020a,Liu2020}. Quantum data are quantum states that are generated by quantum processes. Quantum data could be sampled from a natural process like a chemical reaction, or an artificial quantum system, e.g., a quantum computer.

In quantum mechanics, the time evolution of a quantum state according to the Schr{\"o}dinger equation is mathematically represented by a unitary operator. A fundamental problem in quantum computing, \emph{unitary learning}, consists in reproducing an unknown unitary transformation. Unitary learning has been studied intensively in classical approaches~\cite{Arjovsky2016, Hyland2017}, quantum settings~\cite{Bisio2010, Marvian2016}, and hybrid quantum-classical schemes~\cite{Heya2018, Khatri2019, Sharma2019, Jones2022}. One of the most natural and leading methods is to train a QML model based on a quantum dataset~\cite{Cincio2018, Beer2020, Cincio2021, Caro2021}. It raises a fundamental question: \emph{what is the minimum size of the quantum training dataset that is sufficient for learning a unitary?}

The power and limitation for learning an unknown unitary based on datasets with pure states are investigated in Ref.~\cite{Poland2020}, which suggests that the size of the dataset grows exponentially with the system size. The exponential scaling on the size of training dataset leads to an exponential overhead of the training process, which limits the efficiency of QML. With a sufficient ancillary system, it is not difficult to see that one Choi-Jamio\l{}kowski state is sufficient for learning the unitary~\cite{Sharma2020a, Chakrabarti2019, Cincio2021} by the Choi-Jamio\l{}kowski isomorphism~\cite{Choi1975,Jamiokowski1972}. However, the costs of using the ancillary quantum system and computation in a space with doubled dimensions are also expensive. Can we reduce and find the optimal size of the training dataset for an ancilla-free setting?

In this work, we resolve this problem by establishing optimal quantum datasets for learning a unitary transformation. We first introduce a formal definition of the minimum training dataset problem for unitary learning. We analyze the minimum size of training datasets with pure states, and show that using $2^n$ nonorthogonal but linearly independent pure states as the training dataset is optimal for learning an $n$-qubit unitary. In order to exploit the power of the quantum dataset and reduce the exponential size, we introduce datasets with mixed states for learning a unitary transformation. In particular, we leverage the idea of decoupling to construct an efficient and practical dataset with $n+1$ elementary tensor product states that is sufficient to learn an $n$-qubit unitary operator. This implies that quantum datasets with certain structures could be extremely efficient for learning an arbitrary unitary transformation. We further reduce the size of the dataset to two and prove that this is optimal for learning a unitary with ancilla-free systems. More generally, we prove that a quantum dataset consisting of two randomly generated mixed states is sufficient for learning an unknown unitary. 

We showcase practical applications of our results under the framework of hybrid quantum-classical algorithms. We apply our results to do Hamiltonian simulations and show that unitary learning can significantly reduce the depth of quantum circuits, compared to the traditional method using product formulas~\cite{suzuki1991general}. Another application is oracle compiling, which helps in transforming a high-level quantum algorithm (e.g., Grover's algorithm~\cite{Grover1996}) into a sequence of elementary quantum gates that could run on near-term quantum devices.

Our results on the optimal quantum dataset for unitary learning notably reduce the computational cost of training processes, from an exponential overhead to a constant. On the other hand, since the quantum data are generated by sampling from quantum processes, the optimal size of the dataset also leads to the optimal cost of data sampling. Our results also bring new insights regarding the power of quantum data. Quantum data used in QML can be considerably different than their classical counterparts, since quantum mechanics permits quantum states to be mixed, entangled, measured, distilled, concentrated, diluted, and manipulated. By fully leveraging the laws of quantum mechanics, one could significantly reduce the size of a quantum dataset, which is infeasible in the classical scenario.

We now begin the more technical part of our paper by giving some background and defining the minimum training dataset problem. We note here that detailed mathematical proofs are given in the Appendix.

\section{Main results on unitary learning}
\subsection{The minimum training dataset problem} 
The strategy of learning an unknown unitary $U$ is to train a QML model $V$ with the training dataset $\cD=\{(\rho_j,U\rho_j U^\dagger)\}_{j=1}^t$ of size $t$ so that $V$ acts similarly to $U$ on the training set $\cD$. The goal is that the QML model $V$ can emulate the action of the unitary $U$ on any quantum state. The scheme is illustrated in Fig.~\ref{fig:work_flow}. \update{Notice that one needs to prepare many copies of each state in the dataset and repeatedly access them in the process of unitary learning.} Throughout this paper, we refer to input states $\rho_j$ as the training data, and target output states $U\rho_j U^\dagger$ are the corresponding labels.

\begin{figure}[t]
    \centering
    \includegraphics[width=\linewidth]{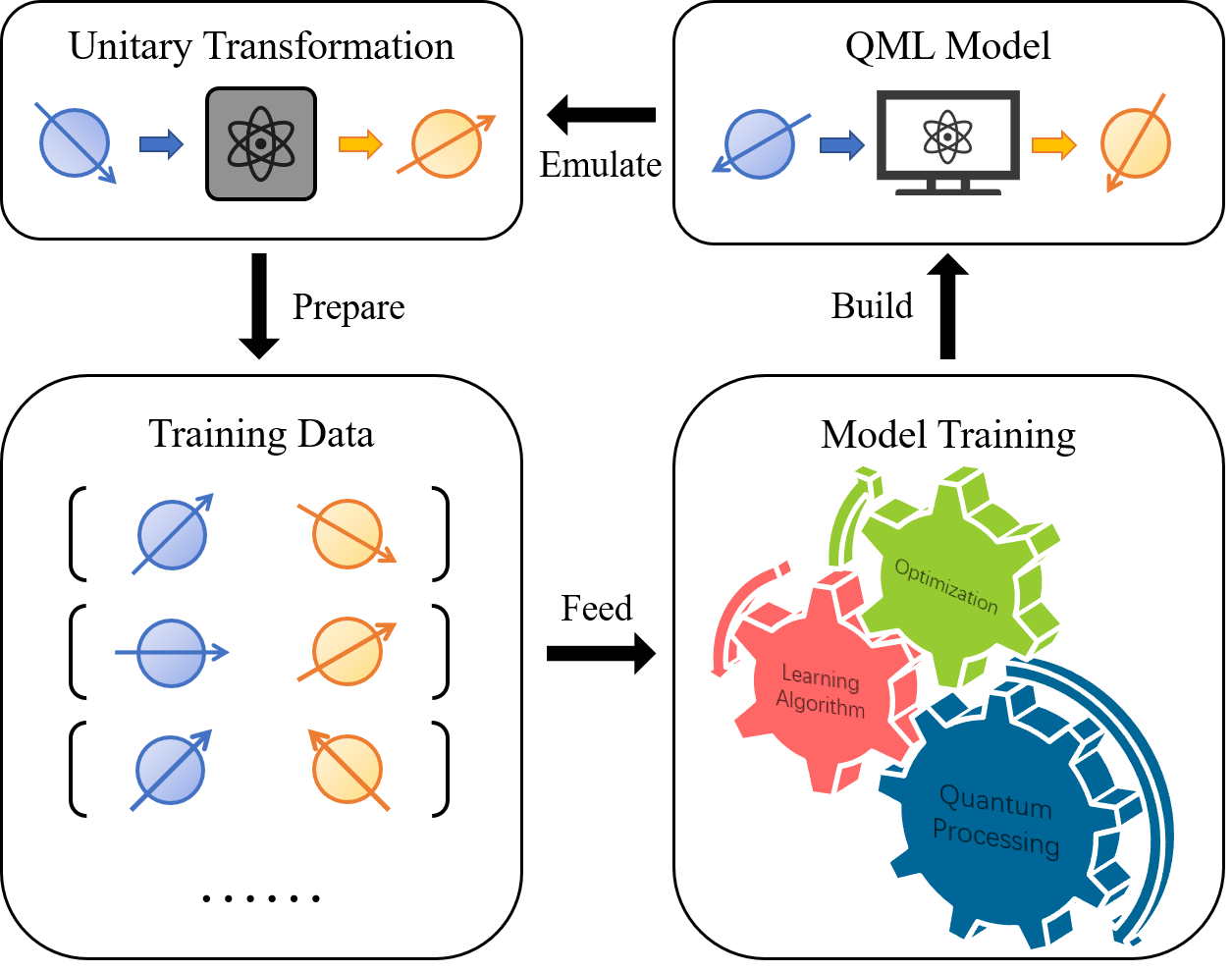}
    \caption{The main scheme of using QML to learn an unknown unitary transformation. For the target unitary transformation, the first step is to prepare the training dataset consisting of input and output quantum states. Next, we train a QML model with the data using quantum processing, machine learning algorithms, and optimization methods. After the training process, the built QML model is able to emulate the action of the target unitary transformation.}
    \label{fig:work_flow}
\end{figure}

To effectively learn unitary operators, one must choose a suitable loss function for performing optimization and quantifying how well a model has learned the target unitary. \update{Note that a standard method of directly evaluating the gate fidelity between the target unitary $U$ and the QML model $V$ replies on quantum
process tomography~\cite{Nielsen2010}, which is infeasible due to the exponential growth of required resource. Instead of using the gate fidelity, we choose the loss function that compares how well the output states of $U$ and $V$ are matched, which is arguably easier to evaluate on a quantum computer.} A common loss function is the squared trace distance between $U\rho_j U^\dagger$ and $V\rho_j V^\dagger$, i.e.
\begin{equation}
    \ell(U\rho_j U^\dagger, V\rho_j V^\dagger) = \norm{U\rho_j U^\dagger - V\rho_j V^\dagger}_1^2,
\end{equation}
where $\norm{\cdot}_1$ indicates the trace norm. \update{The trace distance could be evaluated by a variational quantum algorithm~\cite{chen2021variational}. There are many other distance measures that could also serve as the loss function, e.g.\ the Hilbert-Schmidt distance (or Frobenius distance), which could be evaluated by the \textsc{swap} test~\cite{buhrman2001quantuma}.} The \emph{quantum empirical risk} is defined by averaging the loss function on the training dataset,
\begin{equation}\label{eq:empirical risk}
    \hat{R}_U(V) = \frac{1}{t} \sum_{j=1}^t \ell(U\rho_j U^\dagger, V\rho_j V^\dagger).
\end{equation}

We quantify how well the model $V$ performs in simulating the unitary $U$ by the \emph{quantum risk}, defined as the squared trace distance between the outputs of $U$ and $V$ applied to the same input, averaged over all pure states induced by the uniform Haar measure,
\begin{align}
    R_U(V) &= \int \ell(U\ketbra{\psi}{\psi} U^\dagger, V\ketbra{\psi}{\psi} V^\dagger) d\ket{\psi}\\
    &= \int d\ket{\psi} \norm{U\ketbra{\psi}{\psi} U^\dagger - V\ketbra{\psi}{\psi}V^\dagger}_1^2\\
    &= 1 - \int d\ket{\psi} \abs{\bra{\psi}U^\dagger V\ket{\psi}}^2\\
    &= 1 - \frac{N + \abs{\tr[U^\dagger V]}^2}{N(N + 1)},
\end{align}
where $N = 2^n$ is the dimension of the corresponding Hilbert space.

Since our study focuses on the sufficiency of training datasets, for simplicity we assume the training is always \emph{perfect}, i.e. the quantum empirical risk $\hat{R}_U(V) = 0$. Under the assumption of perfect training, a training dataset is sufficient for the model $V$ to learn the unitary $U$ if the quantum risk $R_U(V) = 0$, otherwise it is insufficient.

The minimum training dataset problem for unitary learning is to find the minimum size $t_{\min}$ such that there exists a sufficient training dataset of size $t_{\min}$ to learn an unknown unitary $U$, and any dataset of size less than $t_{\min}$ is not sufficient.

\subsection{Minimum size of training datasets with pure states} The lower bound of expected quantum risk when applying datasets with pure states is given in Ref.~\cite{Poland2020}, which shows that it is not sufficient to learn an $n$-qubit unitary using less than $2^n$ pairs of Haar random pure states as the training data. For example, it could be inferred that the expected quantum risk of training $V$ with $O(n)$ pairs of pure states is $\EE_U[\EE_\cD[R_U(\theta)]] = 1 - O(\frac{1}{2^n})$.

However, the lower bound of the expected quantum risk does not imply $t_{min} = 2^n$ for training datasets with pure states. Consider using $2^n$ pairs of pure states as the training data. A natural way is to pick an orthonormal basis. Surprisingly, we find that using $2^n$ orthogonal states as the training data is not sufficient for learning a unitary, as the worst case of quantum risk is $R_U(V) = 1 - \frac{1}{2^n+1}$. For more details, we refer to \Cref{lemma: worst case for an orthogonal basis} in the Appendix.

Sharma et al.~\cite{Sharma2020a} give a discussion on the case that pure states in the training dataset are nonorthogonal but linearly independent, and conclude that this case leads to the same lower bound of average quantum risk as in Ref.~\cite{Poland2020}. Here we prove an upper bound of quantum risk when pure states in the training dataset are nonorthogonal but linearly independent.

\begin{restatable}[Upper bound of quantum risk]{proposition}{upperboundofquantumrisk}\label{prop:Upper bound of quantum risk}
Consider learning an $n$-qubit $U$ using the QML model $V$ with a training set $\cD=\{(\ket{x_j}, U\ket{x_j})\}_{j=1}^t$, where $\{\ket{x_j}\}_{j=1}^t$ is a set of nonorthogonal but linearly independent vectors. Assume that $\frac{N}{2} \leq t \leq N$, then we have $R_U(V) \leq 1 - \frac{N + (2t-N)^2}{N(N+1)}$.
\end{restatable}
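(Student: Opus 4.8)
The plan is to translate everything into a statement about the single unitary $W := U^\dagger V$ and to show that the hypotheses force $\abs{\tr[W]} \geq 2t - N$. Since the quantum risk satisfies $R_U(V) = 1 - \frac{N + \abs{\tr[U^\dagger V]}^2}{N(N+1)}$, which is decreasing in $\abs{\tr[W]}^2$, such a lower bound on $\abs{\tr[W]}$ immediately yields the claimed upper bound on $R_U(V)$ for every perfectly trained $V$. The starting point is the perfect-training hypothesis $\hat{R}_U(V)=0$: each summand of the empirical risk is a nonnegative squared trace distance, so perfect training forces $\norm{U\ketbra{x_j}{x_j}U^\dagger - V\ketbra{x_j}{x_j}V^\dagger}_1 = 0$ for every $j$, i.e.\ the two rank-one projectors coincide. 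Hence $V\ket{x_j} = e^{i\theta_j} U\ket{x_j}$ for some phase $\theta_j$, which rewrites as $W\ket{x_j} = e^{i\theta_j}\ket{x_j}$. Thus each training vector $\ket{x_j}$ is an eigenvector of $W$ with a unit-modulus eigenvalue.

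The key observation, and the step I expect to carry the real content, is that unitarity of $W$ together with the non-orthogonality of the $\ket{x_j}$ pins down these eigenvalues. Because $W$ is unitary it is normal, and eigenvectors of a normal operator associated with distinct eigenvalues are orthogonal. Since the $\ket{x_j}$ are pairwise non-orthogonal, no two of their eigenvalues can differ, so $e^{i\theta_j}=e^{i\theta}$ for a common phase $\theta$ and all $j$. Consequently $W$ acts as the scalar $e^{i\theta}$ on the entire $t$-dimensional subspace $S:=\Op{span}\{\ket{x_j}\}_{j=1}^t$; that is, $W|_S = e^{i\theta}\,\id_S$, whence $\tr[W|_S]=t\,e^{i\theta}$ and $\abs{\tr[W|_S]}=t$.

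To finish, I would exploit the orthogonal decomposition $\CC^N = S \oplus S^\perp$. Because $S$ is $W$-invariant and $W$ is unitary, $S^\perp$ is $W$-invariant as well, so $W$ is block diagonal and $\tr[W]=\tr[W|_S]+\tr[W|_{S^\perp}]$. The block $W|_{S^\perp}$ is unitary on an $(N-t)$-dimensional space, so its eigenvalues have modulus one and $\abs{\tr[W|_{S^\perp}]}\leq N-t$ by the triangle inequality. The reverse triangle inequality then gives
\[
\abs{\tr[W]} \;\geq\; \abs{\tr[W|_S]} - \abs{\tr[W|_{S^\perp}]} \;\geq\; t-(N-t) \;=\; 2t-N,
\]
which is nonnegative precisely because of the assumption $t\geq N/2$; this is exactly where that hypothesis is used, ensuring $2t-N\geq 0$ so that squaring preserves the inequality. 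Substituting $\abs{\tr[W]}^2\geq (2t-N)^2$ into the risk formula yields $R_U(V)\leq 1-\frac{N+(2t-N)^2}{N(N+1)}$.

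The main obstacle is conceptual rather than computational: recognizing that the bound holds only because unitarity forbids non-orthogonal eigenvectors from carrying different eigenvalues. Dropping this structure—e.g.\ allowing the overlaps of the $\ket{x_j}$ to split them into two mutually orthogonal clusters—would let an adversarial $V$ assign opposing phases to the clusters, collapse $\tr[W|_S]$, and violate the bound, so the argument must genuinely use "nonorthogonal" and not merely "linearly independent." Once this eigenvalue-coincidence fact is in hand, the remaining steps (the trace decomposition and the two triangle inequalities) are routine.
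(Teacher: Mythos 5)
Your proof is correct and takes essentially the same route as the paper's: show $U^\dagger V$ acts as a common phase $e^{i\theta}\cI$ on the span $S$ of the training vectors and as an arbitrary unitary on $S^\perp$, then apply the (reverse) triangle inequality to obtain $\abs{\tr[U^\dagger V]} \geq t-(N-t) = 2t-N$ and substitute into the risk formula. The only difference is that you prove the phase-equality step yourself (via orthogonality of eigenvectors of a normal operator with distinct eigenvalues, where the paper instead cites Ref.~\cite{Sharma2020a}) and you justify the invariance of $S^\perp$ explicitly, making your write-up self-contained.
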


The proof of this upper bound is based on the fact that $U^\dagger V = \cI$ in the $t$-dimensional subspace $\cH_\cD$ spanned by $\{\ket{x_j}\}_{j=1}^t$. In the other $(N-t)$-dimensional subspace $\cH_\cD^\perp$, $U^\dagger V$ could be any unitary matrix. Considering the worst case, $U^\dagger V = -\cI$ in $\cH_\cD^\perp$, then $\tr[U^\dagger V] = t - (N-t) = 2t - N$.

The upper bound of quantum risk implies that using $t = N$ nonorthogonal but linear-independent pure states is sufficient for learning a unitary. While $t < N$, it is not sufficient to learn a unitary since the worst case of quantum risk is greater than $0$. From the upper bound of quantum risk, we conclude that $t_{\min} = 2^n$ for training datasets with pure states.

The empirical risk is computed at each training iteration under the empirical risk minimization principle, which means the trace distance between $U\rho_j U^\dagger$ and $V\rho_j V^\dagger$ is evaluated for each state $\rho_j$ in the training set. When using $2^n$ pure states as the training data to learn a unitary, the overhead of this exponential scaling limits the efficiency of the learning process, and thus places a cutoff on the size of the target unitary in practical scenarios. 

\subsection{Efficient training dataset with mixed states} 
As we show in the previous section, using pure states as the training dataset to learn a unitary is not efficient because of the overhead exponentially scaling with the number of qubits. However, quantum data are not limited to pure states but also involve mixed states. In practice, quantum systems are open and interact with environments. Such systems are in mixed states as we have only incomplete information about the systems. Hence, it is natural and sensible to consider the training dataset with mixed states.

A mixed state $\rho$ is a mixture of pure states, i.e.,
\begin{equation}\label{eq:mixed_state}
    \rho\equiv \sum_j p_j \ketbra{\psi_j}{\psi_j},
\end{equation}
where each $\ket{\psi_j}$ is a pure state and each $p_j$ is a positive real number such that $\sum_j p_j = 1$. The rank of a mixed state $\rho$ is the rank of its density matrix in Eq.~\eqref{eq:mixed_state}. For an $n$-qubit mixed state, the maximum rank is $2^n$, which is also called full rank, and the minimum rank is $1$, in which case it is a pure state. Obviously, if a mixed state has a larger rank, it contains more information. Intuitively, fewer quantum states are needed to learn an unknown unitary when using mixed states as data.

To learn the target unitary $U$, we construct a training dataset $\cD_1$ that contains $n+1$ quantum states. Define
\begin{equation}\label{eq:rho0}
    \rho_0 = (\ketbra{+}{+})^{\otimes n}
\end{equation}
and 
\begin{equation}\label{eq:rhoj}
    \rho_j = \tau^{\otimes (j-1)} \otimes \ketbra{0}{0} \otimes \tau^{\otimes (n-j)}
\end{equation}
for $1 \leq j \leq n$, where $\tau$ is the maximally mixed state. For an $n$-qubit unitary $U$, let $\cD_1 = \{(\rho_j, U\rho_j U^\dagger)\}_{j=0}^n$ be the training dataset.

To show that the training dataset $\cD_1$ is sufficient for the QML model $V$ to learn $U$, we first prove that training with the dataset $\{(\rho_i, U\rho_i U^\dagger)\}_{j=1}^n$ makes $U^\dagger V$ a diagonal matrix. This is because training with each state decouples the space spanned by half of the computational basis states from the other half. Training with all these states together decouples the space spanned by each computational basis state from each other, which makes $U^\dagger V$ diagonal in the computational basis.
Then training with $(\rho_0, U\rho_0 U^\dagger)$ unifies all diagonal entries in $U^\dagger V$, which ensures that $U = V$.

\begin{restatable}{proposition}{linearsizeoftrainingset}\label{lemma:linear size of training dataset}
For an $n$-qubit unitary $U$ and a QML model $V$, if it satisfies that
\begin{equation}\label{eq:linear_perfect_train}
    U \rho_i U^\dagger = V \rho_i V^\dagger
\end{equation}
for $0 \leq i \leq n$, then we have $U = V$.
\end{restatable}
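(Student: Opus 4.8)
The plan is to pass to $W := U^\dagger V$ and rephrase perfect training as a commutation condition. Since $U$ and $V$ are unitary, the hypothesis $U\rho_i U^\dagger = V\rho_i V^\dagger$ is equivalent to $W\rho_i W^\dagger = \rho_i$, that is $[W,\rho_i]=0$, for every $0\le i\le n$. The target $U=V$ then reduces to showing that $W$ is a phase multiple of the identity.

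First I would exploit the states $\rho_1,\dots,\rho_n$. Because $\tau=\tfrac12 I$, each $\rho_j$ is proportional to the projector $P_j := I^{\ox(j-1)}\ox\ketbra{0}{0}\ox I^{\ox(n-j)}$ onto the subspace in which the $j$-th qubit equals $\ket{0}$, so $[W,\rho_j]=0$ is equivalent to $[W,P_j]=0$. The $\{P_j\}$ mutually commute, and their joint eigenspaces are one-dimensional: a computational basis vector $\ket{x_1\cdots x_n}$ has joint eigenvalue pattern $(b_1,\dots,b_n)$ with $b_j=1-x_j$, so each pattern in $\{0,1\}^n$ selects a unique basis vector. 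A unitary commuting with a complete family of commuting projectors preserves every joint eigenspace; since these are each spanned by a single basis vector, $W$ must be diagonal in the computational basis, $W=\sum_x w_x\ketbra{x}{x}$ with $\abs{w_x}=1$. This is exactly the decoupling step anticipated in the surrounding discussion.

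Next I would use $\rho_0=\ketbra{\phi}{\phi}$ with $\ket{\phi}=\ket{+}^{\ox n}=\tfrac{1}{\sqrt N}\sum_x\ket{x}$, where $N=2^n$. Applying $[W,\rho_0]=0$ to $\ket{\phi}$ and using $\braket{\phi}{\phi}=1$ gives $W\ket{\phi}=\bra{\phi}W\ket{\phi}\,\ket{\phi}$, so $\ket{\phi}$ is an eigenvector of $W$ with eigenvalue $\lambda:=\bra{\phi}W\ket{\phi}$. Combined with diagonality, $W\ket{\phi}=\tfrac{1}{\sqrt N}\sum_x w_x\ket{x}$ must equal $\tfrac{\lambda}{\sqrt N}\sum_x\ket{x}$, forcing $w_x=\lambda$ for every $x$. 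Hence $W=\lambda I$ with $\abs{\lambda}=1$, i.e.\ $V=\lambda U$, which yields $U\rho U^\dagger=V\rho V^\dagger$ for all $\rho$ and establishes $U=V$ up to an irrelevant global phase.

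The main obstacle is the middle step: recognizing that the single-qubit ``half-space'' constraints imposed by $\rho_1,\dots,\rho_n$ together pin down the full computational basis and force $W$ to be diagonal. The cleanest justification is the simultaneous-diagonalization observation above --- that the $n$ commuting projectors have only one-dimensional joint eigenspaces --- rather than an inductive block decomposition, which would be more cumbersome to bookkeep. Checking $\abs{w_x}=1$ and the remaining unitarity bookkeeping is routine.
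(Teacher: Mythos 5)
Your proof is correct and follows essentially the same two-stage route as the paper: the states $\rho_1,\dots,\rho_n$ force $W=U^\dagger V$ to be diagonal in the computational basis (the paper's decoupling lemma, which you recast more cleanly as commutation with the mutually commuting projectors $P_j$, whose joint eigenspaces are one-dimensional), and $\rho_0=(\ketbra{+}{+})^{\otimes n}$ then equalizes all diagonal phases, giving $W=\lambda\cI$. You are also slightly more careful than the paper's wording in noting that the conclusion is $U=V$ only up to an irrelevant global phase, which is all the commutation hypotheses can determine.
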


\Cref{lemma:linear size of training dataset} shows that we can learn a unitary using a QML model based on the training dataset $\cD_1$, which consists of $n+1$ pairs of quantum states. The size of the training dataset $\cD_1$ grows linearly with the number of qubits, which is more efficient than the case of using pure states. Furthermore, each state in $\cD_1$ is in a form of tensor product without entanglement, which is arguably easy to construct.

\subsection{Minimum size of datasets with mixed states} Now that using mixed states could reduce the size of the training dataset, it is of interest to study the ultimate limits on the case of using mixed states to learn a unitary. In this section, we further reduce the size of training sets from linear to constant, which yields an optimal quantum dataset for unitary learning. Moreover, we study the more general cases where the data could be randomly generated instead of constructed.

We first consider learning a unitary $U$ using the QML model $V$ with one mixed state $\rho$ as the training data. Suppose $U\rho U^\dagger = V \rho V^\dagger$, i.e.\ $U^\dagger V$ and $\rho$ commute, then $U^\dagger V$ and $\rho$ share a common set of eigenvectors~\cite[Lemma~5F]{Strang2006}. Since the eigenvalues of $U^\dagger V$ are not fixed, the worst case of quantum risk is $R_U(V) = 1 - \frac{1}{2^n+1}$, we hence conclude that using one mixed state as the training data is not sufficient for learning a unitary.

We then prove that using two mixed states as the training data is sufficient to learn an $n$-qubit unitary.

\begin{restatable}{proposition}{twomixedstatetoidentity}\label{prop:two mixed state to identity}
Suppose a unitary $W$ satisfies $W \rho W^\dagger = \rho$ and $W \sigma W^\dagger = \sigma$, where $\rho$ and $\sigma$ are full-ranked mixed states that have nondegenerate eigenvalues, and eigenvectors of $\rho$ and $\sigma$ are nonorthogonal. Then $W = \cI$.
\end{restatable}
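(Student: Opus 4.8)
The plan is to exploit the two invariance conditions one at a time to force $W$ to be diagonal in two distinct eigenbases, and then to couple these representations via the nonorthogonality hypothesis. First I would rewrite $W\rho W^\dagger=\rho$ as the commutation relation $W\rho=\rho W$. Since $\rho$ is full-ranked with nondegenerate eigenvalues, it admits a unique (up to phases) orthonormal eigenbasis $\{\ket{a_k}\}_{k=1}^N$ associated with $N$ distinct eigenvalues, and any operator commuting with a Hermitian operator of nondegenerate spectrum must be diagonal in that eigenbasis: from $\rho(W\ket{a_k})=W\rho\ket{a_k}=\lambda_k(W\ket{a_k})$ and the one-dimensionality of each eigenspace, $W\ket{a_k}$ is proportional to $\ket{a_k}$. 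As $W$ is unitary its eigenvalues are unimodular, so I obtain $W=\sum_{k=1}^N w_k\ketbra{a_k}{a_k}$ with $\abs{w_k}=1$. Applying the identical argument to $W\sigma W^\dagger=\sigma$ and the eigenbasis $\{\ket{b_\ell}\}_{\ell=1}^N$ of $\sigma$ yields a second diagonal form $W=\sum_{\ell=1}^N w'_\ell\ketbra{b_\ell}{b_\ell}$ with $\abs{w'_\ell}=1$.

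Next I would couple the two representations. Expanding each $\rho$-eigenvector in the $\sigma$-eigenbasis as $\ket{a_k}=\sum_\ell\braket{b_\ell}{a_k}\ket{b_\ell}$ and evaluating $W\ket{a_k}$ in both ways gives $w_k\braket{b_\ell}{a_k}=w'_\ell\braket{b_\ell}{a_k}$ for every pair $(k,\ell)$, upon matching coefficients in the basis $\{\ket{b_\ell}\}$. The nonorthogonality hypothesis guarantees $\braket{b_\ell}{a_k}\neq 0$ for all $k,\ell$, so I may cancel this factor to conclude $w_k=w'_\ell$ for all $k,\ell$. Hence all eigenvalues of $W$ coincide with a single unimodular constant $w$, i.e.\ $W=w\cI$, which is the identity up to a physically irrelevant global phase; fixing that phase gives $W=\cI$.

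The hard part is the simultaneous-diagonalization step, where nondegeneracy does the essential work: if $\rho$ had a degenerate eigenvalue, $W$ could act as a nontrivial unitary inside that eigenspace and evade being diagonal, so the conclusion would fail. The second point demanding care is the correct reading of the hypothesis ``eigenvectors of $\rho$ and $\sigma$ are nonorthogonal'' as the statement that every eigenvector of $\rho$ has nonzero overlap with every eigenvector of $\sigma$; this is precisely what propagates a single common eigenvalue across all pairs $(k,\ell)$. (In fact only connectedness of the bipartite overlap graph is needed, but the stated hypothesis makes the conclusion immediate.) Finally I would flag the global-phase subtlety: since both hypotheses are invariant under $W\mapsto e^{\mathrm{i}\alpha}W$, the conclusion $W=\cI$ is to be understood modulo an overall phase, which is consistent with the role of $W$ as $U^\dagger V$ in the unitary-learning setting, where a global phase leaves the induced channel unchanged.
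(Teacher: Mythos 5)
Your proof is correct and follows essentially the same route as the paper's: you diagonalize $W$ in the nondegenerate eigenbases of $\rho$ and $\sigma$ (proving directly the simultaneous-diagonalization step that the paper cites from Strang) and then use the pairwise nonorthogonality $\braket{b_\ell}{a_k}\neq 0$ to force all eigenvalues of $W$ to coincide, which is precisely the paper's observation that nonorthogonal eigenvectors of a unitary must correspond to the same eigenvalue. Your flag on the global phase is also apt: the paper's own proof in fact concludes $W = e^{i\theta}\cI$, so the stated conclusion $W=\cI$ must likewise be read modulo a global phase, exactly as you note.
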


In order to prove \Cref{prop:two mixed state to identity}, we show that $U^\dagger V$ is simultaneously diagonal with respect to the eigenbasis of $\rho$ and $\sigma$ if $U^\dagger V$ commute with the state $\rho$ and $\sigma$, respectively. Since eigenvectors of $\rho$ and $\sigma$ are mutually nonorthogonal, it implies that $U^\dagger V$ must be identity, i.e.\ $U = V$.

More generally, any two random full-ranked mixed states $\rho_r$ and $\sigma_r$ that are Hilbert-Schmidt distributed satisfy the condition in \Cref{prop:two mixed state to identity}, thus a random dataset $\cD_2 = \{(\rho_r, U\rho_r U^\dagger), (\sigma_r, U\sigma_r U^\dagger)\}$ is sufficient for learning an $n$-qubit unitary.


Our analytical results indicate that at least two quantum states are required for training a QML model to learn an unknown unitary. Thus we can conclude that $t_{\min} = 2$ for the quantum training datasets, which solves the minimum training dataset problem for unitary learning.

\textbf{Remark 1.}
\update{For the task of characterizing a quantum gate, which consists of distinguishing two unitaries as well as assessing the unitarity of a time evolution, Ref.~\cite{Reich2013} showed that the minimum number of input states required is two. On the one hand}, we note that perfectly distinguishing  any two unitaries reduces to the same mathematical problem as unitary learning: find a set of quantum states $\{\rho_j\}$ such that a unitary $U$ satisfies $U\rho_j U^\dagger = \rho_j$ for each $\rho_j$ if and only if $U$ is the identity matrix up to a global phase. Hence, utilizing the main tools and results in Ref.~\cite{Reich2013}, one \update{can conclude that two input states are sufficient for unitary learning.}

\update{On the other hand, since we assume that the target time evolution is unitary, the task of learning a unitary does not involve unitarity assessing. Thus, our results imply that two states are also necessary for distinguishing a pair of unitaries even if we do not need to assess the unitarity of time evolution.}
Moreover, by observing the equivalence between unitary learning and unitary distinguishing, \Cref{prop:Upper bound of quantum risk} implies that the minimal set of pure states sufficient for distinguishing any two unitaries has a size of $d$\update{, while the minimum number of pure states required for gate characterization is $d+1$ as shown in Ref.~\cite{Reich2013}.}


\textbf{Remark 2.}
\update{Notice that numerically generating density matrices in $\cD_2$ that satisfy the conditions in \Cref{prop:two mixed state to identity} is simple, following the method in Ref.~\cite{zyczkowski2001induced}. However, preparing such quantum states (also known as totally mixed thermal states) on a quantum computer is very difficult. As a comparison, the dataset $\cD_1$ contains $n+1$ elementary tensor product states as proposed in Eqs.~\eqref{eq:rho0} and~\eqref{eq:rhoj} that are easy to prepare. We remark that the dataset $\cD_1$ combines the properties of efficiency and feasibility, making it a particularly good choice for the task of unitary learning in practice.}

We next demonstrate applications of unitary learning to investigate the performance of optimal quantum datasets with two mixed states. \update{We adopt variational quantum algorithms in the numerical experiments, which use a classical optimizer to train a parameterized quantum circuit (PQC) as shown in Fig.~\ref{fig:PQC}.} Note that there are many reasonable choices of PQCs that differ from ours, such as alternating layered ansatzes~\cite{Nakaji2021} and variable structure ansatzes~\cite{Bilkis2021}, and the optimal choice may depend on the specifics of learning problem. \update{Our numerical experiments were carried out with the Paddle Quantum toolkit~\cite{Paddlequantum} on the PaddlePaddle Deep Learning Platform~\cite{Ma2019}, using a desktop with an 8-core i7 CPU and 32GB RAM.}

\begin{figure}
    \centering
    \includegraphics[width=0.32\textwidth]{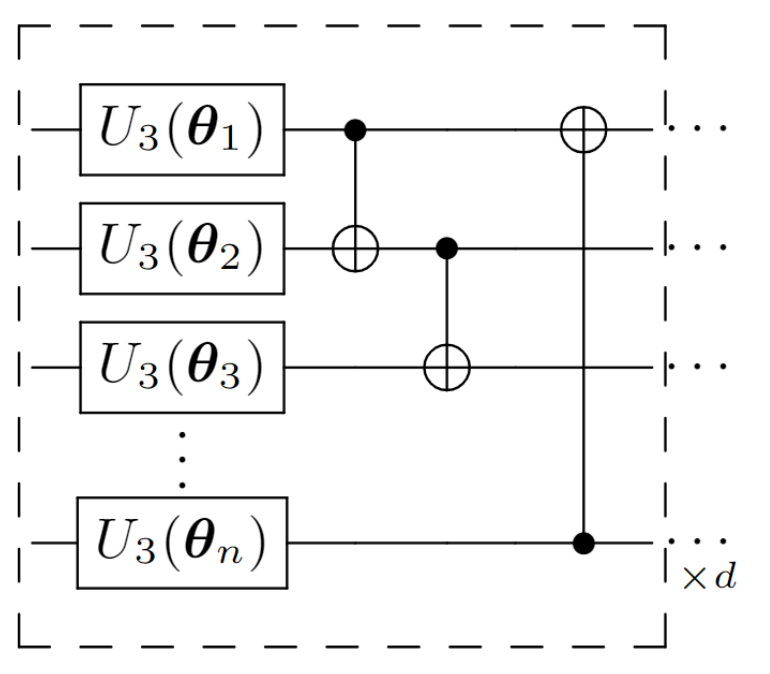}
    \caption{\reupdate{The parameterized quantum circuit, one layer of which has $n$ generic one-qubit rotation gates with three Euler angles $\bm\theta=(\theta, \phi, \lambda)$ and $n$ \textsc{cnot} gates. The \textsc{cnot} gates in the circuit are only between neighboring qubits on a 1D chain with periodic boundary condition.}}
    \label{fig:PQC}
\end{figure}

\section{Applications}
\subsection{Applications to Hamiltonian simulation}
One of the first proposed and the most natural applications of quantum computers is simulating other quantum systems~\cite{feynman1982simulating}. The dynamics of a closed quantum system are determined by a Hamiltonian $H$ (a Hermitian matrix), and the state of the system at time $t$ is represented by a state vector $\ket {\psi(t)}$. The system evolves over time according to the Schr\"{o}dinger equation:
 \begin{align}\label{eq:schrodinger}
 	|\psi(t)\rangle= e^{-i H t}|\psi(0)\rangle.
 \end{align}
The goal of quantum simulation is to produce the final state $\ket {\psi(t)}$ within some error tolerance for a given Hamiltonian $H$, an evolution time $t$, and an initial state $\ket {\psi(0)}$.


\begin{figure*}
\centering
\hfill
\captionsetup[subfigure]
{skip=-106pt,slc=off,margin={-5pt, 0pt}}
\subcaptionbox{\label{gatecount}}
{\includegraphics[width=0.32\textwidth]{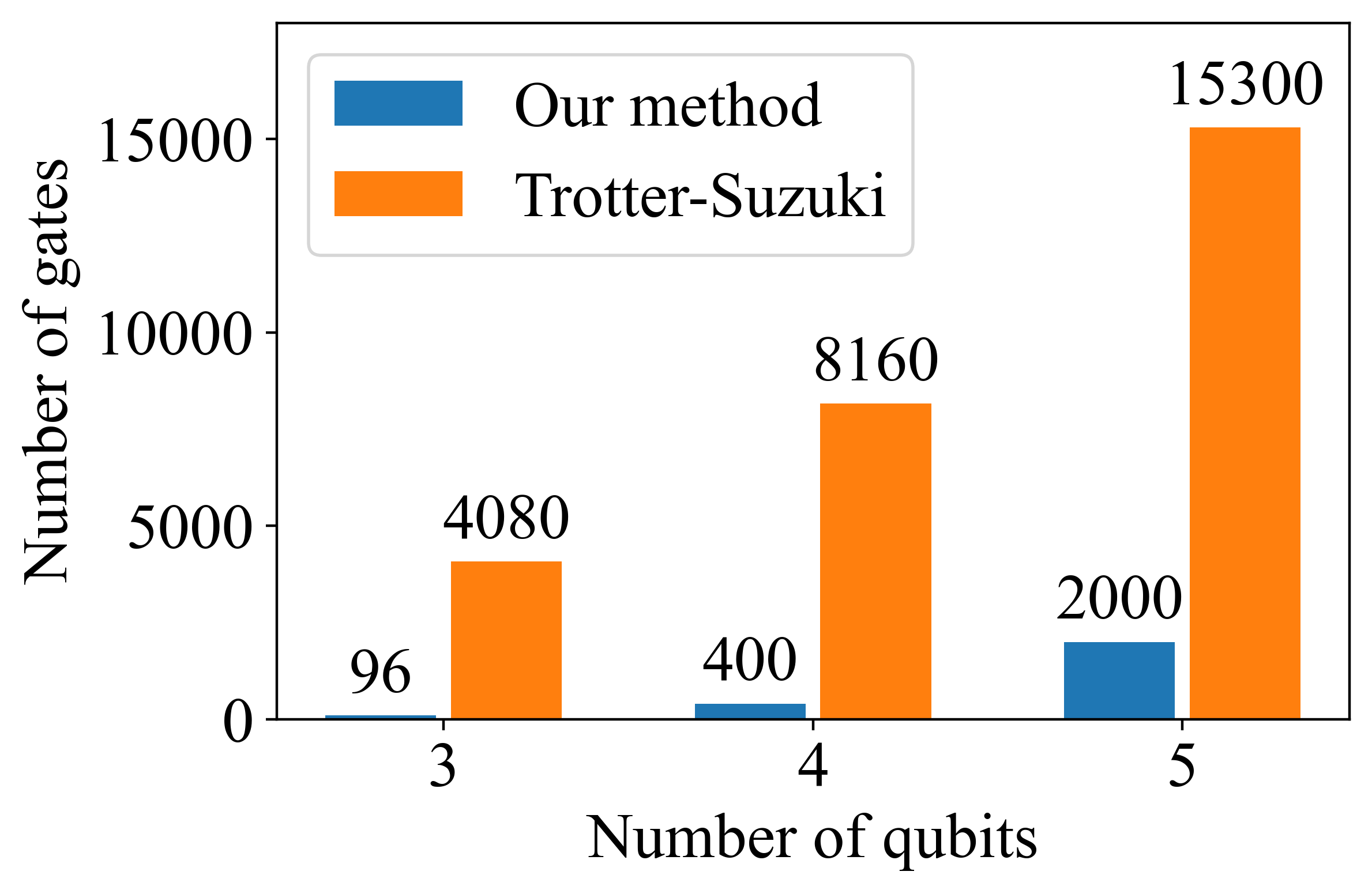}}
\hfill
\captionsetup[subfigure]
{skip=-106pt,slc=off,margin={-5pt, 0pt}}
\subcaptionbox{\label{d1fig}}
{\includegraphics[width=0.31\textwidth]{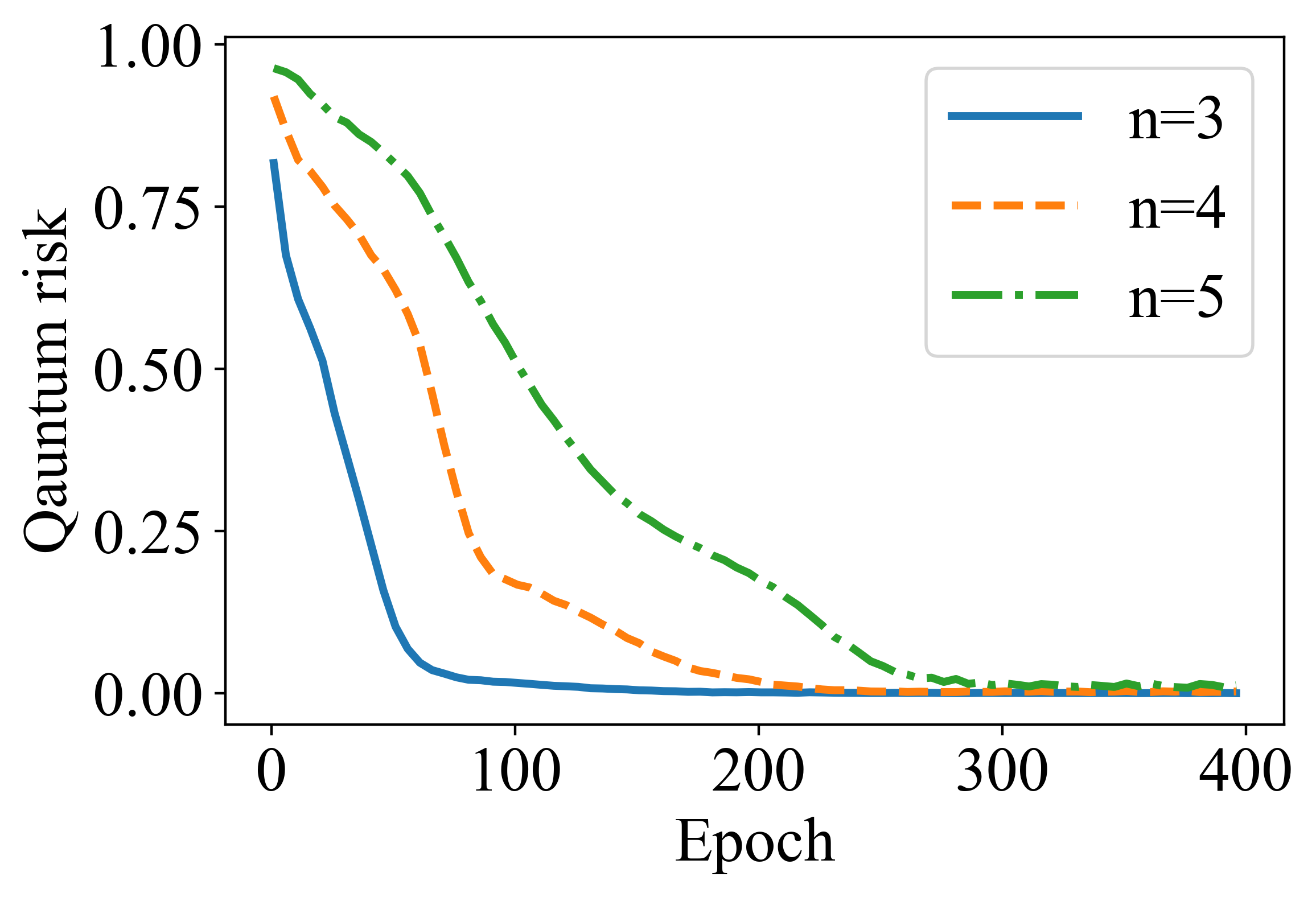}}
\hfill
\captionsetup[subfigure]
{skip=-106pt,slc=off,margin={-5pt,0pt}}
\subcaptionbox{\label{d2fig}}
{\includegraphics[width=0.31\textwidth]{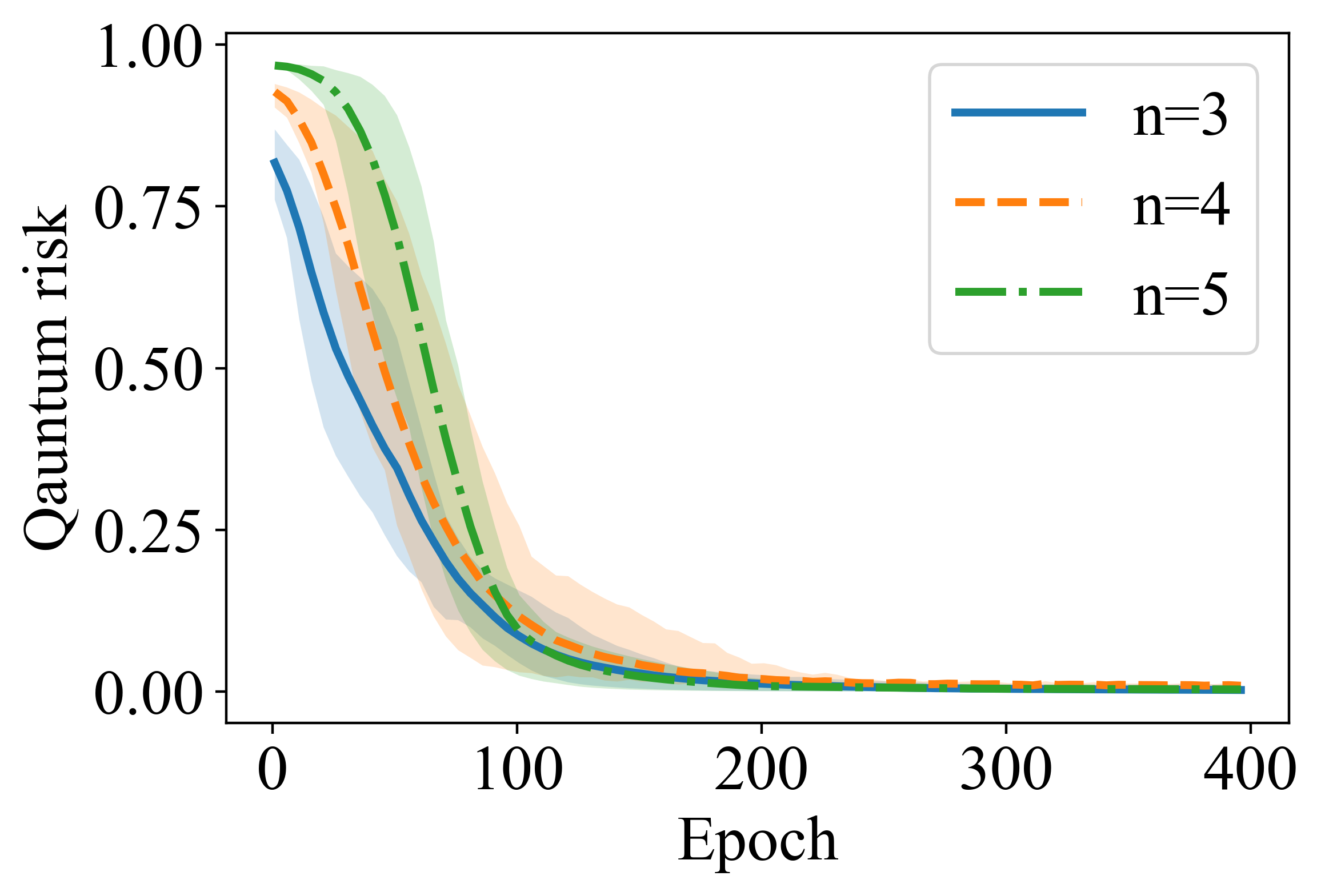}}
\captionsetup{subrefformat=parens}
\caption{Panel~\subref{gatecount} displays the number of required gates to simulate Hamiltonian of different quantum system sizes by our method and Trotter-Suzuki method. \reupdate{Note that each generic $U_3$ gates in the PQC is counted as three single-qubit rotation gates}. \update{Panel~\subref{d1fig} shows the quantum risk in training with the dataset $\cD_1$. The blue, orange and green lines correspond to the results on the quantum system with size $n=3, 4$ and $5$, respectively. Panel~\subref{d2fig} shows the quantum risk in training with randomly generated dataset $\cD_2$. The blue, orange and green lines are averaged over ten independent training instances for the three system sizes. The corresponding shaded region is the min-max threshold in different runs.}}
\label{fig:ham_sim_learn_pf}
\end{figure*}

The time evolution of a quantum system shown in Eq.~\eqref{eq:schrodinger} corresponds to a unitary operator $U = e^{-iHt}$. Simulating such time evolution on a quantum device requires the implementation of this unitary operator. If we can have access to the quantum system, it is possible to train a QML model to simulate $U$ with a training dataset sampled from the quantum process. On the other hand, if the physical evolution $U$ is not directly accessible but has been implemented as a quantum circuit $V_0$ through methods like product formula~\cite{suzuki1991general}, we can query $V_0$ to prepare training data and learn $V_0$ using another quantum circuit with a greatly reduced number of gates.

Here we consider learning a one-dimensional nearest-neighbor Heisenberg model~\cite{Childs2018a} with $n$ qubits, which is described by the Hamiltonian
\begin{equation}
    H = \sum_{j=1}^n (\Vec{\sigma}_j \cdot \Vec{\sigma}_{j+1} + h_j \sigma_j^z),
\end{equation}
where $\Vec{\sigma}_j = (\sigma_j^x, \sigma_j^y, \sigma_j^z)$ denotes a vector of Pauli $x$, $y$ and $z$ matrices on qubit $j$. Here a periodic boundary condition is imposed, i.e.\ $\Vec{\sigma}_{n+1} = \Vec{\sigma}_1$. The coefficient $h_j$ is chosen uniformly at random in $[-h, h]$, where the parameter $h$ characterizes the strength of the disorder.

\reupdate{In numerical experiments, we first generate an $n$-qubit one-dimensional nearest-neighbor Heisenberg model with random $h_j \in [0, 1]$. We then construct the time evolution circuit $V_0$ using the second-order Suzuki product formula~\cite{suzuki1991general} with evolution time $t=n$ and gate fidelity at least $1-10^{-3}$.} We train a PQC $V(\theta)$ with datasets $\cD_1$ and $\cD_2$, respectively, to learn $V_0$. The two mixed states in $\cD_2$ are randomly generated according to the Hilbert-Schmidt measure~\cite{zyczkowski2001induced}. Note that for the dataset $\cD_2$, we independently generate ten samples of datasets and average the experimental results over ten independent training instances in order to reduce the effect of randomness. The number of layers of PQC is chosen empirically, depending on the size of the quantum system. \reupdate{We select the number of layers $d = 8, 25,$ and $100$ for $n=3, 4,$ and $5$, respectively.} All parameters of trainable gates in PQCs are initialized following the uniform distribution on $[0, 2\pi]$. We choose the trace distance as the loss function and the empirical risk defined in Eq.~\eqref{eq:empirical risk} serves as the cost function in the training process. We set the training epoch to be 400 with a batch size equal to the size of datasets for all experiments. The Adam optimizer is used to minimize the cost function. \reupdate{The learning rate is $0.05$ and $0.1$ for training with dataset $\cD_1$ and $\cD_2$, respectively. For $n=5$, the learning rate decays by $0.5$ every $100$ epochs in order to achieve a better convergence}. 

\update{As the numerical results shown in Fig.~\ref{fig:ham_sim_learn_pf}, PQC models trained with the dataset $\cD_1$ and randomly generated datasets $\cD_2$ could simulate the Hamiltonian with high accuracy (the average quantum risk achieves $0.01$). More importantly, the number of gates in PQC models is significantly fewer compared to circuits constructed by the second-order Suzuki product formula.}

\subsection{Applications to oracle compiling}
A \emph{quantum oracle} is a ``black-box'' operator that is used extensively in quantum algorithms. Most known quantum algorithms, such as the Shor's algorithm~\cite{Shor1997}, Grover's algorithm~\cite{Grover1996}, quantum walks~\cite{Szegedy2004}, and quantum singular value transformation~\cite{Gilyen2019}, are constructed using quantum oracles. 

A quantum oracle is usually defined using a classical Boolean function $f:\{0,1\}^n \to \{0,1\}$, which maps an $n$-bit binary input to a binary output. The most common definition of quantum oracles is
\begin{equation}
    U_f ( \ket{x} \otimes \ket{y} ) = \ket{x}\otimes\ket{y \oplus f(x)},
\end{equation}
where ``$\oplus$'' is the \textsc{xor} operation. Notice that the oracle $U_f$ is unitary by the definition. In practice, such binary oracles are often used in quantum algorithms to do a phase rotation. This means that it is simpler to consider an alternative oracle, which is called \emph{phase oracle}, defined as follows:
\begin{equation}
    U_p \ket{x}= (-1)^{f(x)}\ket{x}.
\end{equation}

As a ``black-box'' operator, the structure of a quantum oracle is typically unknown, thus we may not able to directly implement it on a quantum device. Based on the idea of unitary learning, we could train a QML model to learn an unknown quantum oracle, which permits us to compile and run an oracle-based quantum algorithm on quantum devices.

Here we showcase an example of compiling a phase oracle and then run the famous Grover's algorithm based on this oracle. Suppose the search space consists of $N = 2^n$ elements which we label with $n$-bit strings. Let $f:\{0, 1\}^n \to \{0, 1\}$ be the Boolean function telling which elements are marked: the element labeled with $x\in\{0,1\}^n$ is marked if $f(x) = 1$ and unmarked otherwise. Let $U_p$ be the phase oracle that encodes the Boolean function $f$, and we assume there is a unique marked element. Let $D = 2 \ketbra{\Psi}{\Psi} - \cI$ be the Grover diffusion operator, where $\ket{\Psi} = \ket{+}^{\otimes n}$. The operator $G = D\cdot U_p$ is known as the Grover iterate. Grover's algorithm repeatedly applies the Grover iterate $G$ for $\lceil \frac{4}{\pi}\sqrt{N} \rceil$ times on the initial state $\ket{\Psi}$ and measures the resulting state. The success probability of Grover's algorithm is the probability of obtaining the marked element when making the measurement.

\update{In numerical experiments, we train a PQC $V(\theta)$ with dataset $\cD_1$ and random datasets $\cD_2$, respectively, to learn the phase oracle $U_p$. Except for different target unitaries, the setting of training PQCs is the same as what was used in the previous application. Then we append the circuit of Grover diffusion operator $D$ to $V(\theta)$ as the Grover iterate. By repeatedly applying the entire circuit of Grover iterate, we are able to run the compiled version of Grover's algorithm. The experimental results with a comparison to the theoretical success probability of Grover's algorithm are listed in Table~\ref{table:compile grover}. From the results, we could see that, by training a PQC with our proposed datasets $\cD_1$ and $\cD_2$, our method compiles the oracle into a sequence of single-qubit and two-qubit gates with high accuracy, which enables us to run an oracle-based algorithm on near-term quantum devices.}

\begin{table}[htbp]
\centering
\begin{tabular}{ccccc}
\toprule
\setlength{\tabcolsep}{5em}
\multirow{2}{*}[-2pt]{No.\ of qubits} & \multicolumn{3}{c}{Success probability}\\
\cmidrule{2-4}
& Grover & Compiled with $\cD_1$ &  Compiled with $\cD_2$\\
\midrule
3 & $0.94531250$  & $0.94446218$  &  $0.94862192$  \\
\addlinespace
4 & $0.96131897$ & $0.96496582$  & $0.95465205$ \\
\addlinespace
5 & $0.99918232$  & $0.99569303$  & $0.99097359$ \\
\bottomrule
\end{tabular}
\caption{\update{Performance of compiling Grover's algorithm with datasets $\cD_1$ and $\cD_2$. The success probability of compiled Grover's algorithm with dataset $\cD_2$ is averaged over ten independent training instances.}}
\label{table:compile grover}
\end{table}


\section{Concluding remarks} Quantum computers could be used to simulate the dynamics of quantum systems, which is known to be computationally difficult for classical computers. One promising method is to train a quantum machine model based on quantum data, so that the model is able to mimic a unitary transformation. The performance of any machine learning model highly relies on the training data. Typical methods for learning unitaries often employ exponentially large training datasets~\cite{Cincio2018, Cincio2021, Beer2020, Poland2020}. In this work, we identify mixed states as the key to reduce the size of quantum datasets that are sufficient for learning an unknown unitary operator. We introduce efficient quantum datasets with mixed states that are exponentially smaller than datasets consisting of pure states. Explicitly, we show that the optimal size of quantum datasets is $t_{\min} = 2$ and we also obtain a practical dataset consisting of elementary tensor product states. Our analytical results are completely general and thus are not restricted to some particular quantum machine learning models. \update{We showcase the applications of our results in Hamiltonian simulation and oracle compiling under the framework of hybrid quantum-classical algorithms, which demonstrates the effectiveness and practicability of our results in meaningful tasks.}

Our results may also shed light on the advantages of quantum machine learning over classical machine learning, with respect to the training data. The main goal of using quantum machine learning to simulate quantum dynamics is to provide speedups over classical methods, as the computational cost of quantum simulation using classical computers is believed to grow exponentially with system size. Thus any exponential scaling in the quantum machine learning algorithm places a barrier on this speedup. Our results reduce the size of the dataset from exponentially large to linear, and further to a constant, which removes such a barrier.

As the minimum training dataset problem for unitary learning has been answered, one future step is to study the more general case. Note that the most general quantum process is a quantum channel, it is of great interest to study the sufficient training dataset for learning an unknown quantum channel. \reupdate{Possible applications of quantum channel learning include quantum error mitigation~\cite{Endo2020,Temme2017,Strikis2020,Jiang2020,Piveteau2021,Cai2022} and quantum channel simulation~\cite{Fang2018,Caruso2014,Wilde2018,Diaz2018}.} The training dataset problem for learning a quantum channel may also have implications for the learnability of QML models. Another interesting direction is to study the relationship between the size of the quantum dataset and the generalization error of unitary learning, without the assumption of perfect training. \update{Our work shows applications of unitary learning via a quantum dataset, it will also be interesting to discover more such applications in quantum information processing and quantum machine learning~\cite{Biamonte2017a,Schuld2018b,Arunachalam2017}.}

\textbf{Acknowledgements.}
We would like to thank Runyao Duan and Yin Mo for their helpful discussions. 
We also thank Christiane Koch for pointing out the connections between differentiating unitaries and learning unitaries.
This work was done when Z. Y., X. Z., and B. Z. were research interns at Baidu Research.

 
\bibliography{baidu2}

\vspace{1cm}
\onecolumngrid
\vspace{1cm}
\begin{center}
{\textbf{\large Appendix}}
\end{center}

\renewcommand{\theequation}{S\arabic{equation}}
\renewcommand{\theproposition}{S\arabic{proposition}}
\renewcommand{\thelemma}{S\arabic{lemma}}
\renewcommand{\thecorollary}{S\arabic{corollary}}
\renewcommand{\thefigure}{S\arabic{figure}}
\setcounter{equation}{0}
\setcounter{table}{0}
\setcounter{section}{0}
\setcounter{proposition}{0}
\setcounter{lemma}{0}
\setcounter{corollary}{0}
\setcounter{figure}{0}

\section{Proofs of the main results}
\begin{lemma}\label{lemma: worst case for an orthogonal basis}
Consider learning an $n$-qubit unitary $U$ using a QML model $V$ with training set $\cD=\{(\ket{x_j}, U\ket{x_j})\}_{j=1}^N$, where $\{\ket{x_j}\}_{j=1}^N$ is an orthonormal basis. Then the worst case of learning is $R_U(V) = 1 - \frac{1}{2^n+1}$.
\end{lemma}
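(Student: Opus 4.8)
The plan is to convert the perfect-training hypothesis into a structural constraint on the operator $U^\dagger V$ and then optimize the closed-form quantum-risk expression over all operators meeting that constraint. Since the worst case asks for the supremum of $R_U(V)$ subject to $\hat R_U(V)=0$, this reduces the problem to a finite-dimensional optimization over a collection of phases.

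First I would note that perfect training on the pure-state pair $(\ket{x_j},U\ket{x_j})$ means $V\ketbra{x_j}{x_j}V^\dagger = U\ketbra{x_j}{x_j}U^\dagger$ for every $j$, which is equivalent to $U^\dagger V\ket{x_j}=e^{i\theta_j}\ket{x_j}$ for some phase $\theta_j$. Because $\{\ket{x_j}\}_{j=1}^N$ is an orthonormal basis, this says exactly that $U^\dagger V$ is diagonal in this basis with eigenvalues $e^{i\theta_j}$; conversely, any assignment of phases $\{\theta_j\}$ yields a valid model $V=U\sum_j e^{i\theta_j}\ketbra{x_j}{x_j}$ that trains perfectly. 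Thus the only freedom left after perfect training is the choice of the $N$ phases.

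Next I would substitute into the quantum-risk formula established in the main text,
\begin{equation}
    R_U(V)=1-\frac{N+\abs{\tr[U^\dagger V]}^2}{N(N+1)},
\end{equation}
using $\tr[U^\dagger V]=\sum_{j=1}^N e^{i\theta_j}$. Maximizing $R_U(V)$ over the admissible models is then equivalent to minimizing $\abs{\sum_{j=1}^N e^{i\theta_j}}^2$ over all phase assignments.

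Finally, since $N=2^n$ is even, choosing half of the phases to equal $0$ and the other half to equal $\pi$ makes the sum vanish, so the minimum value of $\abs{\tr[U^\dagger V]}^2$ is exactly $0$. Substituting this back gives the worst-case risk $R_U(V)=1-\frac{N}{N(N+1)}=1-\frac{1}{2^n+1}$, as claimed. There is no serious obstacle here; the only points requiring care are the phase ambiguity in passing from ``the output states agree'' to ``$U^\dagger V$ is diagonal'', and verifying that the trace can genuinely be driven to zero (which uses only that $N$ is even).
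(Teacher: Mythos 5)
Your proof is correct and follows essentially the same route as the paper's: perfect training forces $U^\dagger V = \sum_{j=1}^N e^{i\theta_j}\ketbra{x_j}{x_j}$, and the worst case is realized by splitting the $N = 2^n$ phases half-and-half between two values differing by $\pi$, driving $\tr[U^\dagger V]$ to zero so that $R_U(V) = 1 - \frac{1}{2^n+1}$. Your extra care in noting the converse (every phase assignment yields a perfectly trained model, so the worst case is genuinely attained) is a minor tightening of the same argument, not a different approach.
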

\begin{proof}
Since the training is perfect, i.e.\ $V\ket{x_j} = e^{i\theta_j} U\ket{x_j}$ for $j = 1,2,\ldots, N$, we have $\bra{x_j}U^\dagger V\ket{x_j} = e^{i\theta_j}$. Then we can write
\begin{equation}
    U^\dagger V = \sum_{j=1}^N e^{i\theta_j}\ketbra{x_j}{x_j}.
\end{equation}
Note that each $\theta_j$ are not necessarily the same. Consider a worst case scenario, $\theta_j = \theta$ for $j=1,\ldots,\frac{N}{2}$ and $\theta_j = \theta + \pi$ for $j=\frac{N}{2}+1,\ldots,N$, where $\theta$ is an arbitrary angle, then we have $\tr[U^\dagger V]=0$ and $R_U(V) = 1 - \frac{1}{2^n+1}$.
\end{proof}

\upperboundofquantumrisk*
\begin{proof}
Training the QML model $V$ with the training set $\cD$ makes
\begin{equation}
U^\dagger V =
\left[\begin{array}{@{}c|c@{}}
  \begin{matrix}
  e^{i\theta_1} & \cdots & 0 \\
  \vdots & \ddots & \vdots\\
  0 & \cdots & e^{i\theta_t}
  \end{matrix}
  & \mbox{$0$} \\
\hline
  \mbox{$0$} &
  \mbox{$Y$}
\end{array}\right],
\end{equation}
where $Y$ is a unitary matrix on an $(N-t)$-dimensional Hilbert space. Since $\{\ket{x_j}\}_{j=1}^t$ is a set of nonorthogonal but linearly independent vectors, we have $\theta_j = \theta_k, \forall j,k \in \{1,\ldots, t\}$~\cite{Sharma2020a}. Then we can write
\begin{equation}
U^\dagger V =
\left[\begin{array}{@{}c|c@{}}
  \mbox{$e^{i\theta} \cI$}
  & \mbox{$0$} \\
\hline
  \mbox{$0$} &
  \mbox{$Y$}
\end{array}\right].
\end{equation}
Thus we have
\begin{align}
    \abs{\tr[U^\dagger V]} &= \abs{\tr[e^{i\theta}\cI] + \tr[Y]}\\
    &= \abs{\tr[e^{i\theta}\cI] - \tr[-Y]}\\
    &\geq \abs{\abs{\tr[e^{i\theta}\cI]} - \abs{\tr[-Y]}}\label{eq:reverse triangle inequality}\\
    &= \abs{t - \abs{\tr[Y]}}\\
    &\geq t - (N-t) \label{eq:maximum trace of Y}\\
    &= 2t - N.
\end{align}
Inequality~\eqref{eq:reverse triangle inequality} is from the reverse triangle inequality. Inequality~\eqref{eq:maximum trace of Y} follows from the fact that the maximum of $\tr[Y]$ is $(N-t)$. The upper bound of the quantum risk is
\begin{equation}
    R_U(V) = 1 - \frac{N+\abs{\tr[U^\dagger V]}^2}{N(N+1)} \leq 1 - \frac{N + (2t-N)^2}{N(N+1)}.
\end{equation}
\end{proof}

\begin{corollary}
Consider learning an $n$-qubit $U$ using the QML model $V$ with a training set $\cD=\{(\ket{x_j}, U\ket{x_j})\}_{j=1}^N$, where $\{\ket{x_j}\}_{j=1}^N$ is a set of nonorthogonal but linear independent vectors. Then we have $R_U(V) = 0$.
\end{corollary}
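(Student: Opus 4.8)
The plan is to specialize the upper bound on the quantum risk, \Cref{prop:Upper bound of quantum risk}, to the boundary case $t = N$. Since $\{\ket{x_j}\}_{j=1}^N$ consists of $N$ linearly independent vectors in the $N$-dimensional Hilbert space, they span the entire space, so the subspace $\cH_\cD$ is all of $\cH$ and the complementary subspace $\cH_\cD^\perp$ degenerates to the zero space. Substituting $t = N$ into the proposition's bound gives
\begin{equation}
R_U(V) \leq 1 - \frac{N + (2N - N)^2}{N(N+1)} = 1 - \frac{N + N^2}{N(N+1)} = 0.
\end{equation}

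Next I would combine this with the elementary lower bound $R_U(V) \geq 0$, which follows either directly from the definition of $R_U(V)$ as an average of squared trace distances, or from the closed form $R_U(V) = 1 - \frac{N + \abs{\tr[U^\dagger V]}^2}{N(N+1)}$ together with the fact that $\abs{\tr[U^\dagger V]}^2 \leq N^2$ for any unitary $U^\dagger V$. Squeezing between the two bounds upgrades the inequality to the equality $R_U(V) = 0$.

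Alternatively, and perhaps more transparently, one can argue directly without invoking the inequality: the nonorthogonality-plus-linear-independence hypothesis forces, via the phase-matching argument already used in the proof of \Cref{prop:Upper bound of quantum risk} (following Sharma et al.), that all the training phases coincide, so $U^\dagger V = e^{i\theta}\cI$ on the \emph{full} space. Then $\abs{\tr[U^\dagger V]}^2 = N^2$, and substitution into the closed form returns $R_U(V) = 0$; equivalently $V = e^{i\theta} U$ agrees with $U$ up to a global phase, so the model has learned $U$ exactly.

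There is no genuine obstacle here: the statement is an immediate corollary of the preceding proposition evaluated at its upper endpoint $t = N$, where the worst-case complementary block $Y$ disappears and the bound becomes tight. The only point that warrants a sentence of care is recording the matching lower bound $R_U(V) \geq 0$, so that the computed upper bound of $0$ can be promoted to an exact equality rather than merely an inequality.
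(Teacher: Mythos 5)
Your proposal is correct and matches the paper's proof, which simply notes the corollary ``directly follows'' from \Cref{prop:Upper bound of quantum risk}: substituting $t = N$ into the bound gives $R_U(V) \leq 0$, and combined with the trivial lower bound $R_U(V) \geq 0$ this yields the equality. Your added detail (the vanishing complementary block and the alternative phase-matching argument) is a fuller account of the same route, not a different one.
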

\begin{proof}
Directly follows from \Cref{prop:Upper bound of quantum risk}.
\end{proof}

\begin{lemma}\label{lemma:linear_mixed_diagonal}
A unitary operator $W$ satisfies
\begin{align}\label{eq:lemma1_same}
    W \rho_j W^\dagger = \rho_j
\end{align}
for all $1 \leq j \leq n$ only if $W$ is diagonal in the computational basis.
\end{lemma}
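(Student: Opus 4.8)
The plan is to first recognize that each training state $\rho_j$ is, up to normalization, a projector, which turns the invariance condition into a commutation relation. Since $\tau = \tfrac12\cI$ is the single-qubit maximally mixed state, we have $\rho_j = \tfrac{1}{2^{n-1}} P_j$, where $P_j = \cI^{\otimes(j-1)} \otimes \ketbra{0}{0} \otimes \cI^{\otimes(n-j)}$ is the orthogonal projector onto the span of those computational basis states $\ket{x}$ whose $j$-th bit equals $0$. Consequently the hypothesis $W\rho_j W^\dagger = \rho_j$ is equivalent to $W P_j W^\dagger = P_j$, and because $P_j$ is a projector and $W$ is unitary, this is in turn equivalent to the commutation relation $[W, P_j] = 0$ holding for every $1 \le j \le n$.

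Next I would read off the vanishing of the off-diagonal matrix entries directly. Expanding $W = \sum_{x,y \in \{0,1\}^n} W_{xy}\,\ketbra{x}{y}$ and writing $P_j = \sum_{x:\, x_j=0} \ketbra{x}{x}$, the identity $W P_j = P_j W$ forces, upon matching the coefficient of each dyad $\ketbra{x}{y}$, that $W_{xy} = 0$ whenever $x_j \ne y_j$. Taking the conjunction over all $j$ shows that $W_{xy}=0$ as soon as the bit-strings $x$ and $y$ differ in some coordinate, i.e.\ whenever $x \ne y$; hence only the diagonal entries survive and $W$ is diagonal in the computational basis. An equivalent and perhaps cleaner route is to note that $P_j = \tfrac12(\cI + \sigma_j^z)$, so commuting with every $P_j$ is the same as commuting with every $\sigma_j^z$; the simultaneous eigenspaces of $\{\sigma_j^z\}_{j=1}^n$ are one-dimensional, since each $\ket{x}$ is singled out by its unique pattern of $\pm1$ eigenvalues, and an operator commuting with all $\sigma_j^z$ must preserve each such line and therefore fix each $\ket{x}$ up to a scalar.

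The computation itself is entirely routine; the only point deserving emphasis — and the real content of the statement — is that the full \emph{collection} of constraints is exactly what is required. A single state $\rho_j$ only annihilates the entries $W_{xy}$ that mix the two halves of the basis distinguished by qubit $j$, leaving ample off-diagonal freedom, and it is precisely the intersection over all $n$ qubits of these partial conditions that collapses $W$ to a diagonal operator. I would also note explicitly that only the stated implication is claimed (a diagonal $W$ trivially obeys the hypotheses, so the converse is immediate but unused), and that this lemma supplies the first half of the decoupling argument: the residual diagonal phases are subsequently pinned to a single common value by training on $\rho_0$ in the proof of \Cref{lemma:linear size of training dataset}.
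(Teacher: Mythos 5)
Your proof is correct, and while it rests on the same decoupling idea as the paper's proof of this lemma --- each $\rho_j$ separates the computational basis according to one bit, and any two distinct bit strings differ in at least one coordinate --- your execution is more elementary and tighter. The paper writes $W = \sum_k e^{-i\theta_k}\ketbra{\phi_k}{k}$, computes $W\rho_j W^\dagger$ explicitly, deduces that the orthonormal sets $\Phi_j$ and $\Lambda_j$ span the same subspace, and phrases the conclusion as a block (``decoupling'') structure $W = U_{\Phi_j} + U_{\overline{\Phi_j}}$ which is then refined by intersecting over all $j$. You instead observe $\rho_j = 2^{-(n-1)}P_j$ and reduce the hypothesis to $[W,P_j]=0$ (an equivalence that in fact needs only unitarity of $W$, not that $P_j$ is a projector), after which $W_{xy}=0$ whenever $x_j\neq y_j$ falls out by matching coefficients --- no decomposition of $W$ and no spanning argument is required, and the informal ``decoupled'' language is replaced by an explicit entrywise statement. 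Your reformulation $P_j = \frac{1}{2}(\cI+\sigma_j^z)$ packages the same computation conceptually: $W$ lies in the commutant of $\{\sigma_1^z,\dots,\sigma_n^z\}$, whose joint spectrum is nondegenerate, so the commutant is exactly the diagonal algebra; this version makes crisp both why all $n$ constraints are needed and why they suffice. A side benefit of avoiding explicit index arithmetic: the paper's expansion $\rho_j = 2^{-(n-1)}\sum_{k=0}^{2^j-1}\sum_{l=0}^{2^{n-j-1}-1}\proj{l+k\cdot 2^{n-j}}$ actually fixes the bit of place value $2^{n-j-1}$ (the $(j+1)$-th from the left, and the $l$-sum degenerates at $j=n$), an off-by-one relative to the definition in Eq.~\eqref{eq:rhoj}; this slip is harmless to the paper's argument once the bookkeeping is repaired, but your entrywise route never has to touch it.
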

\begin{proof}
A state $\rho_j$ defined in Eq.~\eqref{eq:rhoj} can be written out in the computational basis as
\begin{align}
    \rho_j &= \frac{1}{2^{n-1}} \sum_{k=0}^{2^j-1} \sum_{l=0}^{2^{n-j-1}-1} \proj{l + k\cdot 2^{n-j}}.
\end{align}
The unitary operator $W$ can be written as $W = \sum_{k=0}^{2^{n}-1} e^{-i\theta_k} \ket{\phi_k}\bra{k}$ for an orthonormal set of states $\{\phi_k\}_{k=0}^{2^{n-1}}$. Hence we have
\begin{align}
    W \rho_j W^\dagger &= \left(\sum_{r=0}^{2^{n}-1} e^{-i\theta_r} \ket{\phi_r}\bra{r}\right) \rho_j \left(\sum_{s=0}^{2^{n}-1} e^{i\theta_s} \ket{s}\bra{\phi_s}\right)\\
    &= \frac{1}{2^{n-1}} \sum_{k=0}^{2^j-1} \sum_{l=0}^{2^{n-j-1}-1} \proj{\phi_{l + k\cdot 2^{n-j}}}\\
    &= \frac{1}{2^{n-1}} \sum_{k=0}^{2^j-1} \sum_{l=0}^{2^{n-j-1}-1} \proj{l + k\cdot 2^{n-j}}, \label{eq:lemma1_proof_same}
\end{align}
where Eq.~\eqref{eq:lemma1_proof_same} follows from Eq.~\eqref{eq:lemma1_same}.
Then, the orthonormal sets of vectors $\Phi_j \equiv \{\ket{\phi_{l + k\cdot 2^{n-j}}}\}$ and $\Lambda_j \equiv \{\ket{l + k\cdot 2^{n-j}}\}$ must span the same subspace, and so do $\overline{\Phi_j} \equiv \{\ket{\phi_m} \vert m=0,\dots,2^n-1\} - \Phi_j$ and $\overline{\Lambda_j} \equiv \{\ket{m} \vert m=0,\dots,2^n-1\} - \Lambda_j$. Hence, we can say that the space spanned by $\Phi_j$ is decoupled from the space spanned by $\overline{\Phi_j}$ in the sense that $W = U_{\Phi_j} + U_{\overline{\Phi_j}}$, where $U_{\Phi_j}$ acts nontrivially only on the space spanned by ${\Phi_j}$.

Writing $l + k\cdot 2^{n-j}$ into its binary representation, we note that the set $\Lambda_j$ corresponds to all the bit strings whose $(j+1)$-th bits (from left to right) are $0$. Since for every pair of different computational basis states, there is at least one-bit difference in their corresponding bit strings, there exists a number $j$ such that one of this pair is in $\Lambda_j$ and the other is in $\overline{\Lambda_j}$. Thus, Eq.~\eqref{eq:lemma1_same} holding for all $j$ implies that the space spanned by each computational basis state is decoupled from each other: that is,
\begin{align}
    W = \sum_{k=0}^{2^n-1} U_k = \sum_{k=0}^{2^n-1} e^{-i\theta_k} \proj{k},
\end{align}
where $U_k = e^{-i\theta_k} \proj{k}$. Therefore, $W$ is a diagonal matrix.
\end{proof}

\begin{lemma}\label{lemma:plus_state_makes_identity}
If a diagonal unitary $\Lambda$ satisfies
\begin{equation}
    \Lambda\rho_0 \Lambda^\dagger = \rho_0,
\end{equation}
then $\Lambda = e^{-i\theta} \cI$, where $\cI$ is the identity matrix.
\end{lemma}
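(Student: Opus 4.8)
The plan is to exploit the fact that $\rho_0$ is a rank-one projector, which makes the invariance condition extremely rigid. First I would expand $\Lambda$ in the computational basis as $\Lambda = \sum_{k=0}^{N-1} e^{-i\theta_k}\proj{k}$ (this is the only form allowed, since $\Lambda$ is diagonal by hypothesis), and I would rewrite $\rho_0 = (\ketbra{+}{+})^{\otimes n} = \proj{\Psi}$, where $\ket{\Psi} = \ket{+}^{\otimes n} = \frac{1}{\sqrt{N}}\sum_{k=0}^{N-1}\ket{k}$ is the equal superposition and $N = 2^n$.

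Next I would argue that preserving a rank-one projector forces $\ket{\Psi}$ to be an eigenvector of $\Lambda$. Since $\Lambda\proj{\Psi}\Lambda^\dagger = \proj{\Lambda\Psi}$, the hypothesis $\Lambda\rho_0\Lambda^\dagger = \rho_0$ reads $\proj{\Lambda\Psi} = \proj{\Psi}$. Two pure-state projectors coincide only when they have the same one-dimensional range, so $\ket{\Lambda\Psi}$ must be a scalar multiple of $\ket{\Psi}$; unitarity of $\Lambda$ fixes the modulus of that scalar, giving $\Lambda\ket{\Psi} = e^{i\alpha}\ket{\Psi}$ for some real $\alpha$.

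Then I would read this eigenvector equation off coordinate by coordinate. Computing $\Lambda\ket{\Psi} = \frac{1}{\sqrt{N}}\sum_{k} e^{-i\theta_k}\ket{k}$ and comparing with $e^{i\alpha}\ket{\Psi} = \frac{e^{i\alpha}}{\sqrt{N}}\sum_{k}\ket{k}$, matching the coefficient of each basis vector $\ket{k}$ yields $e^{-i\theta_k} = e^{i\alpha}$ for all $k$. Hence every diagonal phase is equal, so $\Lambda = e^{i\alpha}\cI$; setting $\theta = -\alpha$ gives $\Lambda = e^{-i\theta}\cI$, as claimed.

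There is no genuine obstacle here; the single point needing care is the equivalence between preservation of the rank-one projector and $\ket{\Psi}$ being a phase eigenvector of $\Lambda$. This is precisely where the choice $\rho_0 = (\ketbra{+}{+})^{\otimes n}$ is essential: because $\ket{+}^{\otimes n}$ has a nonzero amplitude on \emph{every} computational basis state, equality of all the phases $e^{-i\theta_k}$ is forced simultaneously by a single scalar eigenvalue equation, rather than only on a proper subspace.
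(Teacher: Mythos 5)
Your proof is correct. It differs in mechanics from the paper's own argument: the paper expands the matrix equation entrywise, writing $\Lambda\rho_0\Lambda^\dagger = \frac{1}{N}\sum_{j,k} e^{-i\theta_j}e^{i\theta_k}\ketbra{j}{k}$ and matching it against $\rho_0$, whose matrix representation has every entry equal to $1/N$, thereby forcing $e^{-i(\theta_j-\theta_k)}=1$ for all $N^2$ pairs $(j,k)$. You instead exploit the rank-one structure first: since $\rho_0 = \proj{\Psi}$ with $\ket{\Psi}=\ket{+}^{\otimes n}$, invariance of the projector is equivalent to the single eigenvalue equation $\Lambda\ket{\Psi}=e^{i\alpha}\ket{\Psi}$, which you then read off coordinatewise, yielding only $N$ coefficient equations. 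Both arguments ultimately rest on the same fact — $\ket{+}^{\otimes n}$ has nonzero amplitude on every computational basis state — but your reduction is more economical and isolates the precise property of $\rho_0$ being used: it is a pure state with full support in the eigenbasis of $\Lambda$. This makes it immediately clear that any such state (not just the uniform superposition) would work in \Cref{lemma:plus_state_makes_identity}, a generality the paper's entrywise computation obscures. One small point of rigor you handled correctly and that is worth keeping explicit: the passage from $\proj{\Lambda\Psi}=\proj{\Psi}$ to $\Lambda\ket{\Psi}=e^{i\alpha}\ket{\Psi}$ uses that equal rank-one projectors have the same range, with unitarity fixing the scalar to a phase.
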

\begin{proof}
The pure state $\ket{+} = \frac{1}{\sqrt{2}} (1,1)^T$, then we can write
\begin{align}
    \biggl(\sum_{j=0}^N e^{-i\theta_j}\ketbra{j}{j}\biggr) \cdot (\ketbra{+}{+})^{\ox n} \cdot  \biggl(\sum_{k=0}^N e^{i\theta_k}\ketbra{k}{k}\biggr)
    &= \biggl(\sum_{j=0}^N e^{-i\theta_j}\ketbra{j}{j} \cdot \ket{+}^{\ox n}\biggr) \biggl(\sum_{k=0}^N e^{i\theta_k} \bra{+}^{\ox n}\ketbra{k}{k}\biggr)\\
    &= (\ketbra{+}{+})^{\ox n}.
\end{align}
Since $\bra{j}(\ket{+}^{\ox n}) = (\bra{+}^{\ox n})\ket{k} = 1/\sqrt{N}$, we have
\begin{equation}
    \frac{1}{N}\sum_{j=0}^N \sum_{k=0}^N e^{-i\theta_j} e^{i\theta_k} \ket{j} \bra{k} = (\ketbra{+}{+})^{\ox n}.
\end{equation}
The state $\rho$ is the matrix with all elements are the same, i.e., $\rho=(\ketbra{+}{+})^{\ox n}=\frac{1}{N}\mathbb{I}$, the elements in $\mathbb{I}$ are all $1$. Thus we have
\begin{equation}
    \sum_j \sum_k e^{-i\theta_j} e^{i\theta_k} \ketbra{j}{k} = \mathbb{I},
\end{equation}
which implies each element on the left-hand side equals to 1, so the phase term must be 1, i.e. $\theta_j=\theta_k=\theta$. Then the diagonal matrix can be written into 
\begin{equation}
    \sum_j e^{-i\theta}\ketbra{j}{j}=e^{-i\theta}\sum_j \ketbra{j}{j} = e^{-i\theta} \mathcal{I}.
\end{equation}
\end{proof}

\linearsizeoftrainingset*
\begin{proof}
Writing Eq.~\eqref{eq:linear_perfect_train} as $U^\dagger V \rho_i V^\dagger U = \rho_i$. By \Cref{lemma:linear_mixed_diagonal,,lemma:plus_state_makes_identity}, we have $U^\dagger V = e^{-i\theta}\cI$.
\end{proof}

\begin{lemma}\label{lemma:one mixed state is not sufficient}
Consider learning an $n$-qubit $U$ using the QML model $V$ with only one training data $\cD=\{(\rho,U\rho U^\dagger)\}$, where $\rho$ is a mixed state. The worst case of learning is $R_U(V)=1-\frac{1}{2^n+1}$.  
\end{lemma}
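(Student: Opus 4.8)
The plan is to reduce perfect training to a commutation constraint on $W := U^\dagger V$ and then read off the worst case from the risk formula. First I would rewrite the perfect-training condition $U\rho U^\dagger = V\rho V^\dagger$ as $W\rho W^\dagger = \rho$, i.e.\ $[W,\rho]=0$. Since $\rho$ is Hermitian and $W$ unitary (both normal operators), commutativity yields a joint orthonormal eigenbasis $\{\ket{v_k}\}_{k=1}^N$, in which $W = \sum_{k=1}^N e^{i\phi_k}\proj{v_k}$ for some phases $\phi_k$. Consequently $\tr[W] = \sum_{k=1}^N e^{i\phi_k}$.

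The key observation is that these eigenphases are entirely unconstrained: any choice of $\{\phi_k\}$ produces a unitary $W$ commuting with $\rho$, hence a legitimate perfectly-trained model $V = UW$. Recalling $R_U(V) = 1 - \frac{N + |\tr[U^\dagger V]|^2}{N(N+1)}$ with $N = 2^n$, maximizing the risk is equivalent to minimizing $|\tr[W]|^2$. Because $N=2^n$ is even, I would take $\phi_k = \theta$ for $k = 1,\dots,N/2$ and $\phi_k = \theta+\pi$ for $k = N/2+1,\dots,N$, so that $\tr[W] = \frac{N}{2}e^{i\theta} + \frac{N}{2}e^{i(\theta+\pi)} = 0$. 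This exhibits a valid model with $|\tr[U^\dagger V]| = 0$.

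Substituting $\tr[U^\dagger V] = 0$ into the risk formula gives $R_U(V) = 1 - \frac{N}{N(N+1)} = 1 - \frac{1}{2^n+1}$. This is genuinely the worst case because $|\tr[U^\dagger V]|^2 \ge 0$ for every $V$ forces $R_U(V) \le 1 - \frac{1}{2^n+1}$, and the construction above attains the bound.

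I do not anticipate a serious obstacle, since the argument mirrors \Cref{lemma: worst case for an orthogonal basis} with the fixed orthonormal basis replaced by the common eigenbasis of $W$ and $\rho$. The only point needing mild care is the simultaneous diagonalization when $\rho$ has degenerate eigenvalues — in the extreme case $\rho = \tau$ (maximally mixed), every unitary commutes with $\rho$, so $W$ is completely free — but the existence of a joint eigenbasis for commuting normal operators handles all mixed states $\rho$ uniformly, and the freedom in the phases $\phi_k$ guaranteeing $\tr[W]=0$ is preserved throughout.
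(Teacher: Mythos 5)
Your proposal is correct and follows essentially the same route as the paper's proof: rewrite perfect training as $[U^\dagger V,\rho]=0$, diagonalize $U^\dagger V$ in a joint eigenbasis with $\rho$, and exhibit the worst case by splitting the $N=2^n$ eigenphases evenly between $\theta$ and $\theta+\pi$ so that $\tr[U^\dagger V]=0$. Your two additions --- verifying the converse direction that any such phase choice yields a legitimately trained $V=UW$, and noting that degeneracy of $\rho$ (e.g.\ $\rho=\tau$) only enlarges the freedom --- are welcome points of care that the paper leaves implicit, but they do not change the argument.
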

\begin{proof}
We assume the training is perfect, i.e. $U\rho U^\dagger  = V\rho V^\dagger$. Then, we rewrite it into 
\begin{equation}
    U^\dagger V \rho V^\dagger U = \rho.
\end{equation}
Since $U^\dagger V$ commutes with $\rho$, $U^\dagger V$ and $\rho$ share the same eigenvectors~\cite[5F]{Strang2006}. Suppose $\rho$ has eigenvectors $\{\psi_j\}_j$, we can write
\begin{equation}
    U^\dagger V = \sum_{j=1}^N e^{i\theta_j}\ketbra{\psi_j}{\psi_j}.
\end{equation}
Then we have
\begin{equation}
    \abs{\tr[U^\dagger V]} = \abs{\sum_{i=1}^N e^{i\theta_j}}.
\end{equation}
Note that each $\theta_j$ are not necessarily the same. Consider a worst case scenario, $\theta_j = \theta$ for $j=1,\ldots,\frac{N}{2}$ and $\theta_j = \theta + \pi$ for $j=\frac{N}{2}+1,\ldots,N$, where $\theta$ is an arbitrary angle, then we have $\tr[U^\dagger V]=0$ and $R_U(V) = 1 - \frac{1}{2^n+1}$.
\end{proof}

\twomixedstatetoidentity*
\begin{proof}
Write mixed states $\rho$ and $\sigma$ in the form of spectral decomposition: 
\begin{align}
    \rho &= \sum_{j} b_j\ketbra{x_j}{x_j},\\
    \sigma &= \sum_{j} c_j\ketbra{y_j}{y_j},
\end{align}
where $\{\ket{x_j}\}_j$ and $\{\ket{y_j}\}_j$ are two different eigenbases. By Ref.~\cite[5F]{Strang2006}, $W$ shares a common eigenbasis with $\rho$ and $\sigma$, respectively. Suppose that $\{\t_j\}_{j=1}^{N}$ is the eigenphases of $W$, we can write $W$ as
\begin{align}
    W &= \sum_{j = 1}^N e^{i\t_j} \ketbra{x_j}{x_j},\\
    W &= \sum_{j = 1}^N e^{i\t_j} \ketbra{y_j}{y_j},
\end{align}
where $\{\ket{x_j}\}_{j=1}^N$ and $\{\ket{y_j}\}_{j=1}^N$ are two different eigenbases of that satisfy $\braket{x_j}{y_k} \neq 0$ for each $1 \leq j,k \leq N$. Since eigenvectors with distinct eigenvalues are orthogonal, $\ket{x_j}$ and $\ket{y_k}$ being nonorthogonal implies that $\a_j = \a_k$ for each $1 \leq j,k \leq N$. Thus we have $W = e^{i\theta} \cI$.
\end{proof}

\begin{corollary}
Consider learning an $n$-qubit $U$ using the QML model $V$ with a training set $\cD_2 = \{(\rho_r, U\rho_r U^\dagger), (\sigma_r, U\sigma_r U^\dagger)\}$, where $\rho_r$ and $\sigma_r$ are two random generated full-ranked mixed states according to the Hilbert-Schmidt measure. Then we have $\EE_U[\EE_\cD[R_U(V)]] = 0$.
\end{corollary}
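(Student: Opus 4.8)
The plan is to reduce this statement to Proposition~\ref{prop:two mixed state to identity} and then argue that its hypotheses hold almost surely under the Hilbert-Schmidt measure. First I would invoke the perfect-training assumption: $\hat{R}_U(V) = 0$ forces $V\rho_r V^\dagger = U\rho_r U^\dagger$ and $V\sigma_r V^\dagger = U\sigma_r U^\dagger$. Setting $W := U^\dagger V$, these identities become $W\rho_r W^\dagger = \rho_r$ and $W\sigma_r W^\dagger = \sigma_r$, exactly the commutation relations appearing in Proposition~\ref{prop:two mixed state to identity}. It therefore only remains to verify, for a pair $(\rho_r,\sigma_r)$ drawn independently from the Hilbert-Schmidt measure, that (i) both states are full rank, (ii) each has a nondegenerate spectrum, and (iii) every eigenvector of $\rho_r$ is nonorthogonal to every eigenvector of $\sigma_r$.

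The key step is to show that conditions (i)--(iii) fail only on a set of measure zero. I would use the fact that the Hilbert-Schmidt measure is absolutely continuous with respect to the Lebesgue measure on the manifold of density matrices (concretely, $\rho_r = GG^\dagger/\tr[GG^\dagger]$ for a complex Ginibre matrix $G$). Condition (i) fails only on $\{\det\rho_r = 0\}$, the zero set of a polynomial in the entries, hence null. Condition (ii) fails only where the discriminant of the characteristic polynomial vanishes, again the zero set of a polynomial and hence null. For condition (iii) I would argue by Fubini: for almost every fixed $\rho_r$ with eigenbasis $\{\ket{x_j}\}$, the overlap $\braket{x_j}{y_k}$ is a nonconstant real-analytic function of the entries of $\sigma_r$, so the locus where any overlap vanishes is a proper real-analytic subvariety and thus Lebesgue-null; integrating over $\rho_r$ shows the joint bad set carries zero measure.

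On the complementary full-measure event, Proposition~\ref{prop:two mixed state to identity} applies and yields $W = U^\dagger V = e^{i\theta}\cI$ for some phase $\theta$. Then $\abs{\tr[U^\dagger V]}^2 = N^2$, and substituting into the closed form for the quantum risk gives
\begin{equation}
    R_U(V) = 1 - \frac{N + N^2}{N(N+1)} = 0.
\end{equation}
Since $R_U(V) = 0$ holds for every target $U$ and for almost every dataset $\cD_2$, the exceptional datasets form a null set that contributes nothing to the integral, and taking the two expectations gives $\EE_U[\EE_\cD[R_U(V)]] = 0$.

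The main obstacle I anticipate is condition (iii): unlike full rank and spectral nondegeneracy, which are each governed by a single polynomial constraint, the nonorthogonality of eigenvectors across two independent draws involves eigenvectors that depend on the matrix entries only implicitly. The cleanest way around this is to avoid computing the eigenvectors explicitly and instead invoke that the vanishing of an overlap is a nontrivial real-analytic condition on the joint parameter space, so that the bad set is a measure-zero subvariety; verifying that the relevant overlap functions are genuinely nonconstant, so that the subvariety is proper, is the one place where a little care is needed.
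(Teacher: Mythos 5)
Your proposal is correct, and its skeleton matches the paper's: reduce to Proposition~\ref{prop:two mixed state to identity} via the perfect-training assumption (setting $W = U^\dagger V$), verify that the proposition's hypotheses hold almost surely under the Hilbert-Schmidt measure, and conclude $R_U(V)=0$ off a null set. Where you diverge is in how the almost-sure claims are justified. The paper invokes the explicit joint eigenvalue density $P(\lambda_1,\ldots,\lambda_N) = C_N\,\delta\bigl(1-\sum_j \lambda_j\bigr)\prod_{j<k}(\lambda_j-\lambda_k)^2$ from the induced-measures literature, from which nondegeneracy is immediate (the density vanishes on the degenerate locus), and disposes of eigenvector orthogonality with a one-line remark that two independent absolutely continuous random vectors are orthogonal with probability zero. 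You instead argue generically: full rank and nondegeneracy fail only on zero sets of polynomials (determinant, discriminant), and the orthogonality locus is a proper real-analytic subvariety handled by Fubini. Your route is more self-contained and, on the eigenvector point, more careful than the paper's, which leaves implicit why the eigenvectors of $\sigma_r$ conditioned on $\rho_r$ are absolutely continuous; the cleanest fix for both treatments is the standard fact that under the Hilbert-Schmidt measure the eigenbasis of $\rho_r = GG^\dagger/\tr[GG^\dagger]$ is Haar-distributed and independent of the spectrum, so each overlap $\braket{x_j}{y_k}$ vanishes with probability zero, which also settles the nonconstancy worry you flag at the end. The paper's route buys brevity by citing known random-matrix results; yours buys independence from those results at the cost of verifying properness of the analytic varieties. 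One cosmetic note: the proposition's conclusion should be read as $W = e^{i\theta}\cI$ (as its own proof shows), which is what you correctly use, since the global phase leaves $\abs{\tr[U^\dagger V]}^2 = N^2$ and hence $R_U(V) = 0$ unaffected.
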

\begin{proof}
The Hilbert-Schmidt measure is equivalent to the measures induced by partial tracing of bipartite composite systems on the Hilbert space $\cH_N \otimes \cH_K$ if $N = K$~\cite{zyczkowski2001induced}. One could generate an $N \times N$ random density matrix $\rho_r$ according to the Hilbert-Schmidt measure by taking normalized random Wishart matrices $AA^\dagger$, with $A$ belonging to the Ginibre ensemble of Hermitian matrices of appropriate dimension. The probability distribution in the simplex of eigenvalues of $\rho_r$ is
\begin{equation}\label{eq:prob dist of eigenvalues}
    P(\lambda_1, \lambda_2, \ldots, \lambda_N) = C_N\delta(1-\sum_{j=1}^N\lambda_j)\prod_{j<k}^N(\lambda_j - \lambda_k)^2,
\end{equation}
where $C_N$ is the normalization constant. From the probability distribution in Eq.~\eqref{eq:prob dist of eigenvalues}, we observe that the probability of $\rho_r$ having two identical eigenvalues is $0$. It implies that a random density matrix has nondegenerate eigenvalues. In addition, eigenvectors of two randomly generated mixed states are nonorthogonal, as the probability that two independent random vectors that follow an absolutely continuous distribution will be exactly orthogonal is $0$. Thus $\rho_r$ and $\sigma_r$ satisfy the condition in \Cref{prop:two mixed state to identity}, which implies that $\EE_U[\EE_\cD[R_U(V)]] = 0$.
\end{proof}
\end{document}